%% file: LU_criterion_26v5.tex
\documentclass[ 
 superscriptaddress,
 longbibliography,
 onecolumn,
 amsmath,amssymb,
 aps,
 pra,
]{revtex4-2}

\input{notation}

\begin{document}


\title{Characterizing local unitary equivalence of multipartite states by local unitary Bargmann invariants} 

\author{Yi Shen}
\email[]{yishen@jiangnan.edu.cn}
\affiliation{School of Mathematics and Data Science, Jiangnan University, Wuxi Jiangsu 214122, China}

\author{Lin Chen}
\email[Corresponding author: ]{linchen@buaa.edu.cn}
\affiliation{School of Mathematical Sciences, Beihang University, Beijing 100191, China}

\author{Shao-Ming Fei}
\email[Corresponding author: ]{feishm@cnu.edu.cn}
\affiliation{School of Mathematical Sciences, Capital Normal University, Beijing, 100048, China}

\date{\today} 

\begin{abstract}
Local unitary (LU) equivalent states are crucial for understanding the structure of quantum entanglement. We investigate the LU equivalence based on the LU Bargmann invariants (LUBIs). We consider for which states the LUBIs are sufficient to ensure the LU equivalence. We focus on a class of multipartite states with complete local commutativity, called CLC states. We first show that for multipartite CLC states, all the LUBIs are sufficient to ensure the local permutation equivalence, if their single-party marginals are diagonal and non-degenerate. By virtue of the fact that the CLC property remains unchanged under local operations, we verify that the LUBIs are also sufficient to ensure the LU equivalence of multipartite CLC states whose single-party marginals are non-degenerate. Furthermore, when the single-party marginals are degenerate, we propose concrete examples of two-qudit and three-qubit systems, showing that there are states which share all the LUBIs but are not LU equivalent.
\end{abstract}


\maketitle



\section{Introduction}

Quantum entanglement is the quintessential non-classical resource in quantum information science and underpins a wide range of protocols including quantum teleportation \cite{telepor-naturereview2023}, superdense coding \cite{densecoding2022} and quantum cryptography \cite{qcrypto2020}. A central task in the study of quantum entanglement is to classify quantum states according to their equivalence under local unitary (LU) transformations, since states within the same equivalence class are fully interchangeable in any local-operation-based quantum information processing tasks \cite{3qubitlu2000,3qubitinequiv2001}. Two LU-equivalent states share the identical quantum correlations such as entanglement, the local distinguishability and the performance in quantum protocols \cite{luequiv2010,pslu-2010}.

For bipartite pure states, the LU equivalence problem is fully solved by virtue of the Schmidt decomposition. The problem becomes significantly complex for multipartite systems and mixed states. For multipartite pure states, Kraus \cite{luequiv2010} derived a necessary and sufficient criterion for LU equivalence based on the spectra of reduced density matrices and the associated correlation tensors, enabling the complete LU classification up to five-qubit pure states. Subsequent work has developed alternative invariant-based approaches, including polynomial invariants constructed from the generalized Schmidt decompositions \cite{mpsinequivlu2012} and singular values of spin-flipping matrices \cite{luspinflip2018}, which provide computationally tractable tools for specific classes of pure states. 

The LU equivalence of mixed states is substantially challenging due to the convex structure of the state space and spectral degeneracies. For two-qubit mixed states, Makhlin \cite{Makhlin2002} established a complete set of 18 polynomial invariants that fully determine the LU equivalence, later refined by Sun et al. who demonstrated that only 14 invariants are required for arbitrary two-qubit states \cite{3qbitinv2017}. For higher-dimensional and multipartite mixed states, progress has been made towards using techniques including matrix algebra \cite{bilu2012}, matrix realignment \cite{msinv2013}, partial transposition \cite{PTLU2013}, polynomial invariants in terms of the generalized Bloch representation \cite{mqbitinv2015}, as well as hyperdeterminants \cite{luhypermatrix2025}, with recent works extending these methods to handle degenerate spectra \cite{LUdegen2026}. Notably, the LU equivalence problem also connects to operational questions in quantum information. For instance, the LU-equivalent sets of generalized Bell states exhibit identical local distinguishability under local operation and classical communication (LOCC) \cite{LUGBS2025}. The classification of LU equivalence classes of entanglement witnesses provides a dual perspective on state equivalence \cite{yiprr2025}.

Despite these advances, it still lacks a complete set of LU invariants for generic multipartite states, and a practical implementation of classification algorithms for higher-dimensional systems. The Bargmann invariants, the fundamental unitary-invariants, have a wide range of applications including quantum fingerprinting \cite{qfinprint2001}, Kirkwood-Dirac quasi-probability distributions \cite{KDBI2024} and quantum imaginarity witnesses and measures \cite{BIIWprl2024,BIIWpra2025,BIIWpra2026,SetIguo2026}. Also referred to as multivariate traces, these quantities admit estimation through constant-depth circuits, thereby ensuring both compatibility with near-term hardware and experimental feasibility \cite{multivariatetr2024}. For the above reasons, the characterization of Bargmann invariants has attracted a great interest recently \cite{BIchara202504,BIchara202510,BIcharaXU202601,BIcharaXu202602,BIZL2026review}.
As a counterpart, the LU Bargmann invariants (LUBIs), a kind of LU invariants, were recently introduced to detect entanglement and classify multipartite states \cite{BIZL202501}. 

In this paper we investigate whether the set of LUBIs are sufficient to ensure the LU equivalence. We show that the LUBIs are sufficient to certify the LU equivalence for special classes of states, but not for general states \cite{LUBI202607}.

We introduce the concept of local commutativity to simplify the criterion in terms of the LUBIs and define the locally commutative (LC) states with respect to some subsystem by Def. \ref{def:commute}, and the completely locally commutative (CLC) state is defined as the one which is locally commutative with respect to each single-party subsystem $A_j$. Then we characterize the LU orbits of LC states by the LUBIs. In Lemma \ref{le:com-inv} we show that the locally commutative property remains unchanged under LU equivalence. It allows us to apply arbitrary LU operations on the states in order to determine whether they are LU equivalent. As a corollary of Lemma \ref{le:pcommute} we further show that all the LUBIs ensure the local commutativity with respect to the same subsystem.
Based on the characterization of LC states we consider testing LU equivalence between CLC states by the LUBIs. For bipartite CLC states $\rho_{AB}$ and $\sigma_{AB}$, the matched LUBIs can be simplified as the standard form: 
\beq
\label{eq:bi-sta}
\tr[(\rho_{AB})^p(\rho_A\ox\I_B)^q(\I_A\ox\rho_B)^r]=\tr[(\sigma_{AB})^p(\sigma_A\ox\I_B)^q(\I_A\ox\sigma_B)^r], ~~\forall p,q,r \geq 0.
\eeq
With this standard form we show in Proposition \ref{thm:distinct} that for a bipartite CLC state $\rho_{AB}$ with two non-degenerate diagonal marginals, (i) the state that is LU equivalent to $\rho_{AB}$ is indeed LP equivalent to $\rho_{AB}$, and (ii) the state $\sigma_{AB}$ is LU equivalent to $\rho_{AB}$ if and only if they share all the matched LUBIs. We also propose a pair of diagonal states in Example \ref{ex:adv} to demonstrate the discriminatory power of LUBIs relative to the common standard necessary invariants. With the fact from Lemma \ref{le:com-inv} that the LC property remains unchanged after applying LU operations, we generalize the above result to the bipartite states whose marginals may not be diagonal. We show in Theorem \ref{thm:comm+sim} that for a CLC state $\rho_{AB}$ with two non-degenerate marginals, another state $\sigma_{AB}$ is LU equivalent to $\rho_{AB}$ if and only if they share all the matched LUBIs. In Ref. \cite{yiprr2025} the connection between the LU equivalence relations on the sets of states and the set of EWs was studied. We are inspired to extend the above results to determine the LU equivalence between bipartite EWs. In Proposition \ref{prop:biew} we similarly show that for bipartite CLC EWs, they are LU equivalent if and only if they share all the matched LUBIs generated from the corresponding EWs. Then we generalize the above two results to the case of multipartite CLC states by Proposition \ref{le:tri-lp} and Theorem \ref{thm:tri-lu}. Specifically, for two multipartite CLC states with non-degenerate single-party marginals, all matched LUBIs are sufficient to ensure they are LU equivalent. It is worth mentioning that at most $D$ (equal to the global dimension) LUBIs are needed to verify the LU equivalence between such states, which means this criterion is operational and resource-saving for such states.

When testing LU equivalence of CLC states, we realize that the precondition of non-generate single-party marginals are crucial for the sufficiency of the proposed criterion. Specifically we indicate by examples that the criterion in terms of all matched LUBIs is not sufficient to ensure LU equivalence if the single-party marginals could be degenerate. It is known that the criterion in terms of matched LUBIs is necessary and sufficient for two-qubit states \cite{bilu2012}. We thus construct two-qutrit states in Example \ref{ex:c-ex} which share all the matched LUBIs but are not LU equivalent. We then extend such an example to two-qudit states with local dimension greater than $2$, by Example \ref{ex:cex-2}. Finally we reveal by Example \ref{ex:cex-tri} that such a pair of states also exists in multipartite systems, e.g., there exist three-qubit states $\rho,\sigma$ which share all the matched LUBIs but are not LU equivalent.

The remainder of this paper is organized as follows: In Sec. \ref{sec:pre}, we clarify the notations and definitions, and present a necessary and sufficient condition for determining the LU equivalence. We then test the LU equivalence of multipartite CLC states which share all the matched LUBIs in Sec. \ref{sec:clc}. In Sec. \ref{subsec:clc-c} we characterize the locally commutative states with all matched LUBIs under LU equivalence. In Sec. \ref{subsec:bi-lu} and Sec. \ref{subsec:multi-lu} we respectively show that the criterion in terms of the matched LUBIs is effective to distinguish LU orbits for bipartite and multipartite CLC states. In Sec. \ref{sec:cex} we propose concrete examples to show that the sufficiency of this criterion fails when the states have degenerate single-party marginals. Finally, the concluding remarks are given in Sec. \ref{sec:con}.

\section{Preliminaries}
\label{sec:pre}

Let $[n]:=\{1,2,\cdots,n\}$. We denote by $[n]^l$ the Cartesian product of $l$ times of $[n]$. For an $n$-partite system of $A_1\cdots A_n$, we refer to $A_\cS$ as a subsystem of $A_{j_1}\cdots A_{j_k}$ with $\cS=\{j_1,\cdots,j_k\}$, and correspondingly $A_{\cS^c}$ represents the subsystem with $\cS^c=[n]\backslash\cS$. Let $\rho$ be an $n$-partite state of the system $A_1\cdots A_n$. Denote by $\rho_{A_\cS}$ the marginal of $\rho$ with respect to the subsystem $A_\cS$. 
An $n$-partite operator $P$ is called local if $P=\bigotimes_{j=1}^n P_j$, where each $P_j$ is a permutation operator on the $j$th subsystem. Two $n$-partite operators $X$ and $Y$ are called equivalent by local permutation, also known as LP equivalent, if there exists a local permutation $P$ such that $Y=PXP^\dg$. 

We focus on characterizing LU equivalence of states by the local unitary Bargmann invariants which are the multivariate traces generated from the states and their single-party marginals. 
Given a tuple of Hermitian operators supported on the same Hilbert space, $(X_1,\cdots,X_n)$, we refer to the Bargmann invariants of this tuple associated with a string $(j_1,\cdots,j_l)$ as the $(j_1,\cdots,j_l)$-BI, $\tr(X_{j_1}X_{j_2}\cdots X_{j_l})$, where each $j_i$ is from $[n]$.
The characterization of LU equivalence is indeed to identity the LU invariants that distinguish  different LU orbits. 


\begin{definition}
\label{def:match}
Two $n$-partite states $\rho_{A_1\cdots A_n}$ and $\sigma_{A_1\cdots A_n}$ are matched, if for all strings $(j_1,\cdots,j_l)\in[2^n-1]^l$ the associated LUBIs from $\rho_{AB}$ and $\sigma_{AB}$ match, i.e.,
 \beq
\label{eq:cj-2m}
\tr\left[\alpha_{j_1}\alpha_{j_2}\cdots\alpha_{j_l}\right]=\tr\left[\beta_{j_1}\beta_{j_2}\cdots\beta_{j_l}\right],
\eeq 
where 
\beq
\label{eq:cj-2m.1}
\bal
\{\a_1,\cdots,\a_{2^n-1}\}&=\{\rho_{A_{\cS}}\ox\I_{A_{\cS^c}}~|~\emptyset\neq\cS\subseteq[n]\}, \\
\{\b_1,\cdots,\b_{2^n-1}\}&=\{\sigma_{A_{\cS}}\ox\I_{A_{\cS^c}}~|~\emptyset\neq\cS\subseteq[n]\}.
\eal
\eeq
\end{definition}

In particular, for two bipartite states $\rho_{AB}$ and $\sigma_{AB}$, they are matched if for all strings $(j_1,\cdots,j_l)\in[3]^l$, the associated LUBIs from $\rho_{AB}$ and $\sigma_{AB}$ match,
\beq
\label{eq:cj-1}
\tr\left[\alpha_{j_1}\alpha_{j_2}\cdots\alpha_{j_l}\right]=\tr\left[\beta_{j_1}\beta_{j_2}\cdots\beta_{j_l}\right],
\eeq 
where $\a_1=\rho_{AB},\a_2=\rho_A\ox\I_B,\a_3=\I_A\ox\rho_B$, and $\b_1=\sigma_{AB},\b_2=\sigma_A\ox\I_B,\b_3=\I_A\ox\sigma_B$.


\begin{definition}
\label{def:commute}
For an $n$-partite state $\rho_{A_1\cdots A_n}$, if it commutes with $\rho_{A_\cS}\ox\I_{A_{\cS^c}}$, then $\rho_{A_1\cdots A_n}$ is called locally commutative (LC) with respect to the subsystem $A_\cS$. Furthermore, if $\rho_{A_1\cdots A_n}$ commutes with $\rho_{A_\cS}\ox\I_{A_{\cS^c}}$ for any set $\cS$ with $\abs{\cS}=1$, then it is called completely locally commutative (CLC). 
\end{definition}


We first propose a necessary and sufficient condition for two LU equivalent multipartite Hermitian operators, which applies not only to multipartite states, but also to non-positive semidefinite observables such as the entanglement witnesses (EWs).

\begin{lemma}
\label{le:lu-equiv-g}
Let $H=\sum_{j=1}^m\lambda_j P_j$ and $K=\sum_{j=1}^n \mu_jQ_j$ be the spectral decompositions of two $n$-partite Hermitian operators supported on $\bigotimes_{j=1}^n\cH_{A_j}$, where $\lambda_1>\lambda_2>\cdots>\lambda_m$ ($\mu_1>\mu_2>\cdots>\mu_n$) are the distinct eigenvalues of $H$ ($K$), $P_j$'s are pairwisely orthogonal projectors, $P_jP_k=\delta_{jk}P_j$ (similarly for $Q_j$'s). Then $H$ and $K$ are LU equivalent if and only if the following conditions both hold:
(i) $m=n$ and $\lambda_j=\mu_j$ for each $1\leq j\leq m$;
(ii) there exists a common local unitary $\bigotimes_{j=1}^n U_j$ such that $P_j=(\bigotimes_{j=1}^n U_j)Q_j(\bigotimes_{j=1}^n U_j)^\dg$ for each $1\leq j\leq m$.
\end{lemma}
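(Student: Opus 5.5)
The plan is to prove both directions directly from the uniqueness of the spectral decomposition. Throughout I write $U:=\bigotimes_{j=1}^n U_j$ for a local unitary. I also note that the letter $n$ is overloaded in the statement: it is both the number of parties and the number of distinct eigenvalues of $K$. I will keep these apart by writing $m'$ for the number of distinct eigenvalues of $K$.

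\emph{Sufficiency.} Assume (i) and (ii) hold. Then $m=m'$, $\lambda_j=\mu_j$ for every $j$, and there is a common local unitary $U$ with $P_j=UQ_jU^\dg$ for each $j$. Summing gives
\[
UKU^\dg=\sum_{j}\mu_j UQ_jU^\dg=\sum_j\lambda_jP_j=H .
\]
So $H$ and $K$ are LU equivalent, and this direction is immediate.

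\emph{Necessity.} Assume $H=UKU^\dg$ for some local unitary $U$. Since $U$ is unitary, $H$ and $K$ are unitarily similar and therefore have the same spectrum as a set. This gives $m=m'$. Both lists of distinct eigenvalues are strictly decreasing, so $\lambda_j=\mu_j$ for all $j$, which is (i).

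For (ii), write
\[
H=UKU^\dg=\sum_j\mu_j\,UQ_jU^\dg .
\]
The operators $UQ_jU^\dg$ are pairwise orthogonal projectors, because $UQ_jU^\dg UQ_kU^\dg=UQ_jQ_kU^\dg=\delta_{jk}UQ_jU^\dg$. They sum to $U(\sum_j Q_j)U^\dg=\I$. Their coefficients $\mu_j=\lambda_j$ are distinct. Hence this is a spectral decomposition of $H$. The spectral decomposition with distinct eigenvalues is unique: each spectral projector can be written as the Lagrange polynomial
\[
P_j=\prod_{k\ne j}\frac{H-\lambda_k\I}{\lambda_j-\lambda_k}.
\]
Therefore $P_j=UQ_jU^\dg$ for every $j$, with the same $U$ serving for all $j$.

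The only step needing any care is the uniqueness of the spectral projectors, and the Lagrange-polynomial formula settles it. The argument never uses positive semidefiniteness, so it applies verbatim to entanglement witnesses and other Hermitian observables, as the statement claims.
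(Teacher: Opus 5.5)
Your proof is correct and follows the paper's route: sufficiency by summing the conjugated projectors, and necessity via invariance of the spectrum plus uniqueness of the spectral projectors of $H$. Your Lagrange-interpolation formula just makes explicit the uniqueness step the paper asserts. Renaming the number of distinct eigenvalues of $K$ to $m'$ also cleanly resolves the paper's clash with the number of parties $n$.
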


\begin{proof}
The ``If'' part can be verified directly. We show the ``Only if'' part in detail. Suppose that $H$ and $K$ are LU equivalent. Since the eigenvalues are unitary invariants, condition (i) must hold. Let $m$ be the number of distinct eigenvalues, and $\lambda_j$'s the common eigenvalues of both $H$ and $K$. By definition there is a local unitary $\bigotimes_{j=1}^n U_j$ such that $H=(\bigotimes_{j=1}^n U_j)K(\bigotimes_{j=1}^n U_j)^\dg$. Based on the spectral decomposition of $K$, we obtain that
\beq
\label{eq:lu-equiv-g1}
H=\left(\bigotimes_{j=1}^n U_j\right)K\left(\bigotimes_{j=1}^n U_j\right)^\dg=\sum_{j=1}^m\lambda_j\left(\bigotimes_{j=1}^n U_j\right)Q_j\left(\bigotimes_{j=1}^n U_j\right)^\dg.
\eeq
One can verify that the decomposition given by Eq. \eqref{eq:lu-equiv-g1} is a spectral decomposition of $H$, as the operators with the form $(\bigotimes_{j=1}^n U_j)Q_j(\bigotimes_{j=1}^n U_j)^\dg$ remain to be pairwisely orthogonal projectors. For any Hermitian operator, the eigenspace corresponding to each distinct eigenvalue is unique, and thus the projector onto each eigenspace is unique. Therefore, for each eigenvalue $\lambda_j$ of $H$, the corresponding projector is unique,
\beq
\label{eq:commonlu}
P_j=\left(\bigotimes_{j=1}^n U_j\right)Q_j\left(\bigotimes_{j=1}^n U_j\right)^\dg.
\eeq
This implies condition (ii).
\end{proof}

\section{Identifying LU equivalence of CLC states with matched LUBIs}
\label{sec:clc}

We now consider whether all matched LUBIs guarantee the LU equivalence by focusing on the CLC states invloving the complete local commutativity. In Sec. \ref{subsec:clc-c} we characterize the locally commutative states with all matched LUBIs under LU equivalence. In Sec. \ref{subsec:bi-lu} and Sec. \ref{subsec:multi-lu} we respectively show that it is effective to distinguish LU orbits for bipartite CLC states and multipartite CLC states, by virtue of the matched LUBIs.  

\subsection{Characterization of LC states with all matched LUBIs}
\label{subsec:clc-c}

We first show that the locally commutative property proposed in Def. \ref{def:commute} remains unchanged under LU equivalence.

\begin{lemma}
    \label{le:com-inv}
Let $\rho_{A_1\cdots A_n}$ be an $n$-partite state. For any subsystem $A_j$, $\rho_{A_1\cdots A_n}$ commutes with $\rho_{A_j}\ox\I_{A_{j^c}}$, if and only if $(\bigox_{j=1}^n U_j)\rho_{A_1\cdots A_n}(\bigox_{j=1}^n U_j)^\dg$ commutes with $(U_j\rho_A U_j^\dg)\ox\I_{A_{j^c}}$ for an arbitrary local unitary $\bigox_{j=1}^n U_j$, where $A_{j^c}$ denotes the $(n-1)$-party subsystem without $A_j$.
\end{lemma}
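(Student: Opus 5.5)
The plan is to reduce everything to two facts: marginals transform covariantly under local unitaries, and commutators are preserved under conjugation by a unitary. Write $U:=\bigox_{k=1}^n U_k$ and $\rho':=U\rho_{A_1\cdots A_n}U^\dg$. The statement writes $\rho_A$ for the marginal on $A_j$; I read it as $\rho_{A_j}$.

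\emph{Step 1: the marginal of $\rho'$ on $A_j$.} I will verify that $\rho'_{A_j}=U_j\rho_{A_j}U_j^\dg$. Take the partial trace over $A_{j^c}$ and use that $\bigox_{k\neq j}U_k$ acts only on the traced-out factors. Cyclicity of the partial trace with respect to operators supported on the traced subsystem then cancels $\bigox_{k\neq j}U_k$ against its adjoint. This is the only place where the product structure of $U$ is used. It is also what makes the operator in the statement, $(U_j\rho_{A_j}U_j^\dg)\ox\I_{A_{j^c}}$, the correct LU-transformed object, namely $\rho'_{A_j}\ox\I_{A_{j^c}}$.

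\emph{Step 2: rewrite the transformed marginal.} Observe that
\[
(U_j\rho_{A_j}U_j^\dg)\ox\I_{A_{j^c}}=U\left(\rho_{A_j}\ox\I_{A_{j^c}}\right)U^\dg .
\]
This holds because the factors $U_k$ with $k\neq j$ act on the identity part and give $U_k\I U_k^\dg=\I$.

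\emph{Step 3: conjugate the commutator.} From Steps 1 and 2,
\[
\left[\rho',\,(U_j\rho_{A_j}U_j^\dg)\ox\I_{A_{j^c}}\right]=U\left[\rho_{A_1\cdots A_n},\,\rho_{A_j}\ox\I_{A_{j^c}}\right]U^\dg .
\]
Since $U$ is invertible, the left-hand side vanishes if and only if the bracket on the right vanishes. This gives both directions at once: ``if'' is the special case $U=\I$, and ``only if'' holds for arbitrary $U$.

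I do not expect a genuine obstacle. The only point needing care is Step 1, the covariance of the single-party marginal, and only to make sure the partial-trace cyclicity is applied to operators acting solely on the traced subsystem. The argument works for each $j$ separately, so it also shows that the CLC property is LU invariant.
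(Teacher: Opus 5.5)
Your proof is correct and follows essentially the same route as the paper. The paper reduces to the bipartition $A_j|A_{j^c}$ with a local unitary $U\ox V$, and checks that both products of the transformed state with $(U\rho_A U^\dg)\ox\I_B$ are the $(U\ox V)$-conjugates of the corresponding original products; this is exactly your Steps 2 and 3. Your Step 1 (covariance of the single-party marginal) is not needed for the equivalence as literally stated, but it correctly confirms that the operator in the statement is the marginal of the transformed state, which the paper implicitly relies on when it later uses this lemma to carry the CLC property through LU transformations.
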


\begin{proof}
As for a bipartion, we only need consider the bipartite state $\rho_{AB}$ and the corresponding local commutativity with respect to subsystem $A$. For an arbitrary local unitary $U\ox V$, by calculation we have
\beq
\label{eq:com-inv-1}
\bal
&\left((U\ox V)\rho_{AB}(U\ox V)^\dg\right)\left((U\rho_A U^\dg)\ox \I_B\right) \\
=&(U\ox V)\left(\rho_{AB}(\rho_A\ox\I_B)\right)(U\ox V)^\dg
\eal
\eeq
and
\beq
\label{eq:com-inv-2}
\bal
&\left((U\rho_A U^\dg)\ox \I_B\right)\left((U\ox V)\rho_{AB}(U\ox V)^\dg\right) \\
=&(U\ox V)\left((\rho_A\ox\I_B)\rho_{AB}\right)(U\ox V)^\dg.
\eal
\eeq
By the two equations above we conclude that $(U\ox V)\rho_{AB}(U\ox V)^\dg$ commutes with $(U\rho_A U^\dg)\ox\I_B$ if and only if $\rho_{AB}$ commutes with $\rho_A\ox\I_B$. This completes the proof.
\end{proof}

As a corollary of the above lemma, two $n$-partite states $\rho$ and $\sigma$ of system $A_1\cdots A_n$ are not LU equivalent if $\rho$ is an LC state with respect to some subsystem $A_j$ while $\sigma$ is not locally commutative with respect to the subsystem $A_j$. It actually provides a necessary condition to distinguish the LU orbits of LC states. In light of this fact, it is natural to ask if all the matched LUBIs ensure the local commutativity with respect to the same subsystem. If the answer is negative, the matched LUBIs cannot ensure the corresponding two states in the same LU orbit. Specifically, for a state $\rho_{A_1\cdots A_n}$ that is locally commutative with respect to some subsystem $A_j$, and another state $\sigma_{A_1\cdots A_n}$ matching with $\rho_{A_1\cdots A_n}$, we consider if $\sigma_{A_1\cdots A_n}$ is also locally commutative with respect to the same subsystem $A_j$. The following lemma implies that the answer is positive, and supports that the matched LUBIs can be used to test LU equivalence.

\begin{lemma}
    \label{le:pcommute}
    For two Hermitian matrices $M$ and $N$ supported on the same Hilbert space, they are commutative, $[M,N]=0$, if and only if they satisfy that
    \beq
    \label{eq:pcomm-1}
    \tr\left((MN)^2\right)=\tr\left(M^2 N^2\right).
    \eeq
\end{lemma}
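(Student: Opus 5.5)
The idea is to express the commutator's Hilbert--Schmidt norm through the two traces in \eqref{eq:pcomm-1}. Since $M,N$ are Hermitian, $C:=[M,N]=MN-NM$ is anti-Hermitian, so $C^\dg=-C$ and $\tr(C^\dg C)=-\tr(C^2)\geq 0$, with equality if and only if $C=0$. The plan is therefore to expand $\tr(C^2)$, rewrite it in terms of $\tr((MN)^2)$ and $\tr(M^2N^2)$, and read off the equivalence.

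First we expand:
\[
C^2=(MN)^2-MN^2M-NM^2N+(NM)^2 .
\]
By cyclicity of the trace, $\tr((NM)^2)=\tr(NMNM)=\tr(MNMN)=\tr((MN)^2)$. Likewise $\tr(MN^2M)=\tr(M^2N^2)$ and $\tr(NM^2N)=\tr(M^2N^2)$. Hence
\[
\tr(C^2)=2\tr\left((MN)^2\right)-2\tr\left(M^2N^2\right),
\]
which gives
\[
\|[M,N]\|_2^2=\tr(C^\dg C)=2\left(\tr(M^2N^2)-\tr\left((MN)^2\right)\right).
\]
Both traces are real, since $\tr(M^2N^2)=\tr(MN^2M)$ and $\tr((MN)^2)=\tr((MN)(MN)^\dg)^{*}$-type conjugation symmetry together with the real value of $\tr(C^2)$ force reality; in any case the identity itself shows their difference is real. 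The identity also yields the inequality $\tr((MN)^2)\leq\tr(M^2N^2)$ as a byproduct.

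With this identity both directions are immediate. If $[M,N]=0$, then $MN=NM$, so $(MN)^2=M^2N^2$ and \eqref{eq:pcomm-1} holds directly. Conversely, if \eqref{eq:pcomm-1} holds, the Hilbert--Schmidt norm of $[M,N]$ vanishes, so $[M,N]=0$.

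The only delicate point is getting the signs and cyclic rearrangements right; there is no genuine obstacle. No earlier result of the paper is needed. The intended use is with $M=\rho_{A_1\cdots A_n}$ and $N=\rho_{A_j}\otimes\I_{A_{j^c}}$. Both traces in \eqref{eq:pcomm-1} are then LUBIs, so matched states share the LC property with respect to $A_j$.
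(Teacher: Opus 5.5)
Your argument is correct and is essentially the paper's: both use that $C=[M,N]$ is anti-Hermitian, so $\tr(C^\dg C)=-\tr(C^2)$, and expand $\tr(C^2)$ by cyclicity. The only difference is that you record the identity $\tr(C^\dg C)=2\left(\tr(M^2N^2)-\tr\left((MN)^2\right)\right)$ before using the hypothesis, whereas the paper substitutes the hypothesis partway through the computation. Your aside on reality is misstated: one has $\overline{\tr\left((MN)^2\right)}=\tr\left((NM)^2\right)=\tr\left((MN)^2\right)$, while $\tr\left((MN)(MN)^\dg\right)=\tr(M^2N^2)$. This plays no role in the equivalence.
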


\begin{proof}
The ``Only if" part can be verified directly. We next show the ``If" part.
   Let $C=[M,N]=MN-NM$. It follows that $C^\dg=-C$. Thus, we obtain that
   \beq
       \label{eq:pcomm-2}
       \tr(C^\dg C)=\tr(-CC)=-\tr(C^2).
   \eeq
   By calculation we obtain that
   \beq
\label{eq:pcomm-3}
\bal
&\tr(C^2)\\
=&\tr\left((MN)^2-(MNNM)-(NMMN)+(NM)^2\right) \\
=&\tr\left((MN)^2\right)-2\tr(M^2N^2)+\tr\left((NM)^2\right) \\
=&\tr\left((NM)^2\right)-\tr\left((MN)^2\right)=0,
\eal
\eeq
where the second equality follows from the cyclic law of trace, the third equality follows by Eq. \eqref{eq:pcomm-1}, and the last equality also follows from the cyclic law of trace. Then we conclude from Eq. \eqref{eq:pcomm-2} that $\tr(C^\dg C)=0$. Due to the semi-definite positivity of $C^\dg C$, we obtain that $C=0$, namely, $MN=NM$. This completes the proof.
\end{proof}

We take the bipartite case as an example to answer the question right above Lemma \ref{le:pcommute}.
We claim that a part of all the matched LUBIs formulated by Eq. \eqref{eq:cj-1} can ensure that $\beta_j$ and $\beta_k$ are commutative, if $\alpha_j$ and $\alpha_k$ are commutative. The reason is as follows. Assume that $\alpha_j$ commutes with $\alpha_k$. Then we obtain that 
\beq
\label{eq:pcommu-4}
\tr\left((\beta_j\beta_k)^2\right) =\tr\left((\alpha_j\alpha_k)^2\right) =\tr\left(\alpha_j^2\alpha_k^2\right) =\tr\left(\beta_j^2\beta_k^2\right).
\eeq
It follows from Lemma \ref{le:pcommute} that Eq. \eqref{eq:pcommu-4} implies $\beta_j\beta_k=\beta_k\beta_j$. It means that if $\rho_{AB}$ commutes with $\rho_A\ox \I_B$, the matched LUBIs ensure that $\sigma_{AB}$ also commutes with $\sigma_A\ox \I_B$.
This fact can be directly extended to the multipartite case. In other words, for any two multipartite states that are matched, if one state is locally commutative with respect to some given subsystem, then the other must be locally commutative with respect to the same subsystem. Thus, two states that are matched must satisfy the necessary condition of LU equivalence implied from Lemma \ref{le:com-inv}. Specially, any two multipartite states that are matched either are both CLC states or both are not.

\subsection{Identifying bipartite LU equivalence of CLC states}
\label{subsec:bi-lu}

In this subsection we characterize the LU orbits of bipartite CLC states with the matched LUBIs. For this purpose, we first study the marginals of $\rho_{AB}$ and $\sigma_{AB}$ which generate those $\a_2=\rho_A\ox\I_B$, $\a_3=\I_A\ox\rho_B$, and $\b_2=\sigma_A\ox\I_B$, $\b_3=\I_A\ox\sigma_B$. It suffices to assume that $\rho_A$ ($\rho_B$) shares the same eigenvalues as $\sigma_A$ ($\sigma_B$). Thus, by virtue of Lemma \ref{le:com-inv}, we may apply some LU operators and local permutations on the two bipartite states $\rho_{AB}$ and $\sigma_{AB}$, such that the marginals $\rho_A=\sigma_A$ and $\rho_B=\sigma_B$ are all diagonal. We focus on the case where the diagonal marginals $\rho_A=\sigma_A$ and $\rho_B=\sigma_B$ are all non-degenerate, i.e., the diagonal entries are distinct.

\begin{lemma}
\label{le:diagonal}
(i) For a bipartite Hermitian operator $M_{AB}$ supported on $\bbC^m\ox\bbC^n$, if it commutes with both $X_A\ox\I_B$ and $\I_A\ox Y_B$, where $X_A=\tr_B M_{AB}$ and $Y_B=\tr_A M_{AB}$ are some non-degenerate diagonal matrices acting on systems $A$ and $B$ respectively, then $M_{AB}$ has to be diagonal.

(ii) Suppose that $\rho_{AB}$ and $\sigma_{AB}$ are two bipartite CLC states supported on $\bbC^m\ox\bbC^n$, whose marginals are all non-degenerate diagonal. Then $\rho_{AB}$ and $\sigma_{AB}$ are LU equivalent if and only if they are the same up to a local permutation, i.e., they are LP equivalent.
\end{lemma}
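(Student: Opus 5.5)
Part (i) reduces to a statement about commutants of diagonal matrices with distinct entries. Part (ii) combines (i) with Lemma \ref{le:com-inv} and an analysis of which local unitaries can preserve non-degenerate diagonal marginals.

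\textbf{Part (i).} Write $X_A=\mathrm{diag}(x_1,\dots,x_m)$ with the $x_i$ distinct. Then $X_A\ox\I_B=\sum_i x_i\,\proj{i}\ox\I_B$ has eigenspaces $\ket{i}\ox\bbC^n$ with distinct eigenvalues. Since $M_{AB}$ commutes with $X_A\ox\I_B$, it leaves each eigenspace invariant, so
\[
M_{AB}=\sum_i \proj{i}\ox M_i
\]
is block diagonal. Similarly, commuting with $\I_A\ox Y_B$ with $Y_B=\mathrm{diag}(y_1,\dots,y_n)$ non-degenerate gives
\[
M_{AB}=\sum_k N_k\ox\proj{k}.
\]
Comparing matrix entries $\bra{i,k}M\ket{i',k'}$, the first form forces this entry to vanish unless $i=i'$, and the second forces it to vanish unless $k=k'$. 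Hence $M_{AB}$ is diagonal in the product basis. The hypotheses $X_A=\tr_B M$ and $Y_B=\tr_A M$ are not really needed beyond supplying these non-degenerate diagonal operators.

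\textbf{Part (ii), ``if''.} This direction is trivial, since a local permutation is a local unitary.

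\textbf{Part (ii), ``only if''.} Suppose $\sigma_{AB}=(U\ox V)\rho_{AB}(U\ox V)^\dg$. Taking partial traces gives $\sigma_A=U\rho_AU^\dg$ and $\sigma_B=V\rho_BV^\dg$.
\begin{itemize}
\item Both $\rho_A$ and $\sigma_A$ are diagonal with distinct entries and are unitarily similar, so they have the same diagonal entries up to order. Hence there is a permutation $P$ with $\sigma_A=P\rho_AP^\dg$.
\item Then $P^\dg U$ commutes with the non-degenerate diagonal matrix $\rho_A$, so $P^\dg U=D$ is a diagonal unitary. Thus $U=PD$.
\item Likewise $V=QE$ with $Q$ a permutation and $E$ a diagonal unitary.
\end{itemize}
By part (i), applied to the CLC states $\rho_{AB}$ and $\sigma_{AB}$ (each commutes with its own non-degenerate diagonal marginals), both states are diagonal. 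Conjugating a diagonal matrix by the diagonal unitary $D\ox E$ leaves it unchanged. Therefore
\[
\sigma_{AB}=(P\ox Q)(D\ox E)\rho_{AB}(D\ox E)^\dg(P\ox Q)^\dg=(P\ox Q)\rho_{AB}(P\ox Q)^\dg,
\]
so $\rho_{AB}$ and $\sigma_{AB}$ are LP equivalent.

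\textbf{Main obstacle.} The only delicate point is the step that $U$ must be a permutation times a diagonal unitary. This relies essentially on non-degeneracy: with repeated eigenvalues the commutant of $\rho_A$ contains non-diagonal unitary blocks, and the argument breaks down. Everything else is routine.
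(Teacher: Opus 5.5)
Your proof is correct and follows the paper's argument essentially step for step. In (i), commuting with $X_A\ox\I_B$ and $\I_A\ox Y_B$, each with distinct eigenvalues, kills the off-diagonal blocks and then the off-diagonal entries; in (ii), you write $U=P\Lambda_A$, $V=Q\Lambda_B$ and note that conjugation by the diagonal unitary $\Lambda_A\ox\Lambda_B$ leaves the diagonal state $\rho_{AB}$ unchanged. Your opening mention of Lemma \ref{le:com-inv} is unnecessary, since the argument never uses it.
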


\begin{proof}
(i) We may write $M_{AB}$ in a block matrix as
\beq
\label{eq:diag-m-1}
M_{AB}=
\bma
A_{11} & A_{12} & \cdots & A_{1m} \\
A_{12}^\dg & A_{22} & \cdots & A_{2m} \\
\vdots & \vdots & \cdots & \vdots \\
A_{1m}^\dg & A_{2m}^\dg & \cdots & A_{mm}
\ema,
\eeq
where each $A_{ij}$ is an $n\times n$ matrix. Assume $X_A=\diag(a_1,a_2,\cdots,a_m)$ and $Y_B=\diag(b_1,b_2,\cdots,b_n)$, with distinct $a_1,a_2,\cdots,a_m$ and $b_1,b_2,\cdots,b_m$. Due to the condition $[X_A\ox\I_B,M_{AB}]=0$ and the distinct elements $a_1,a_2,\cdots,a_m$, we conclude that the non-diagonal blocks are all zero matrices by comparing $(X_A\ox\I_B)M_{AB}$ with $M_{AB}(X_A\ox\I_B)$. That is, $A_{ij}=O$ for $i\neq j$ in Eq. \eqref{eq:diag-m-1}. Similarly, due to the condition $[\I_A\ox Y_B,M_{AB}]=0$ and the distinct elements $b_1,b_2,\cdots,b_n$, for each $A_{jj}$, the non-diagonal entries of $A_{jj}$ are all zero by comparing $(\I_A\ox Y_B) M_{AB}$ with $M_{AB}(\I_A\ox Y_B)$. Thus, $M_{AB}$ in \eqref{eq:diag-m-1} is a diagonal matrix supported on $\bbC^m\ox\bbC^n$. 

(ii) By assertion (i) we obtain that $\rho_{AB}$ and $\sigma_{AB}$ are both diagonal matrices when the corresponding $\alpha_j,\alpha_k$ are commutative, and $\beta_j,\beta_k$ are commutative, for each pair $(j,k)$. The ``if'' part is obvious. We next show the ``only if'' part. Assume that $\rho_{AB}$ is LU equivalent to $\sigma_{AB}$, namely, there exists an LU operator $U\ox V$ such that $(U\ox V)\rho_{AB}(U\ox V)^\dg=\sigma_{AB}$. Since the diagonal $\rho_A$ and $\sigma_A$ ($\rho_B$ and $\sigma_B$) share the identical eigenvalues, there exist two permutations $P$ and $Q$ such that $\sigma_A=P\rho_A P^\dg$ and $\sigma_B=Q\rho_B Q^\dg$. By the LU equivalence of $\rho_{AB}$ and $\sigma_{AB}$ we obtain that $U\rho_A U^\dg=\sigma_A=P\rho_A P^\dg$ and $V\rho_B V^\dg =\sigma_B=Q\rho_B Q^\dg$. Since $\rho_A$ is a diagonal matrix with distinct diagonal entries, one can verify that $U\rho_A U^\dg=P\rho_A P^\dg$ holds only if $P^\dg U$ is a diagonal unitary matrix. Similarly, we conclude that $V\rho_B V^\dg =Q\rho_B Q^\dg$ holds only if $Q^\dg V$ is a diagonal unitary matrix. That is, $U=P\Lambda_A$ and $V=Q\Lambda_B$ for some diagonal unitaries $\Lambda_A$ and $\Lambda_B$. It follows by the LU equivalence that  
\beq
\label{eq:inv-2}
\bal
\sigma_{AB}=&\left((P\ox Q)(\Lambda_A\ox\Lambda_B)\right)\rho_{AB}\left((\Lambda_A\ox\Lambda_B)^\dg(P\ox Q)^\dg\right) \\
=&(P\ox Q)\rho_{AB}(\Lambda_A\ox\Lambda_B)(\Lambda_A\ox\Lambda_B)^\dg (P\ox Q)^\dg \\
=&(P\ox Q)\rho_{AB}(P\ox Q)^\dg,
\eal
\eeq
namely, $\rho_{AB}$ and $\sigma_{AB}$ are the same up to a local permutation $P\ox Q$.
This completes the proof.
\end{proof}

Lemma \ref{le:diagonal} (ii) shows that the LU orbit of a bipartite CLC state with non-degenerate diagonal marginals is precisely the same as the LP orbit, which gives a more efficient method to test the LU equivalence of such states due to the relatively simple mathematical structure of permutation matrices \cite{LPLU25}.

\begin{proposition}
\label{thm:distinct}
Suppose that $\rho_{AB}$ is a CLC state with two non-degenerate diagonal marginals. For some state $\sigma_{AB}$ with two diagonal marginals, $\rho_{AB}$ and $\sigma_{AB}$ are LU equivalent if and only if they are LP equivalent, or if and only if they are matched.
\end{proposition}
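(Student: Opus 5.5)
We prove the cycle of implications LP equivalent $\Rightarrow$ LU equivalent $\Rightarrow$ matched $\Rightarrow$ LP equivalent. The first two are routine. A local permutation $P\ox Q$ is in particular a local unitary. For the second, suppose $\sigma_{AB}=(U\ox V)\rho_{AB}(U\ox V)^\dg$. Then $\sigma_A=U\rho_AU^\dg$ and $\sigma_B=V\rho_BV^\dg$, so each $\b_j=(U\ox V)\a_j(U\ox V)^\dg$. The unitaries cancel inside every product $\tr[\b_{j_1}\cdots\b_{j_l}]$ by cyclicity, so the two states are matched.

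For the remaining implication, matched $\Rightarrow$ LP equivalent, the plan has four steps.

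\textbf{Step 1: $\sigma_{AB}$ is CLC.} Since $\rho_{AB}$ is CLC, $\a_1$ commutes with $\a_2$ and with $\a_3$. The matched LUBIs include $\tr((\b_1\b_k)^2)=\tr((\a_1\a_k)^2)=\tr(\a_1^2\a_k^2)=\tr(\b_1^2\b_k^2)$. By Lemma \ref{le:pcommute}, $\b_1$ commutes with $\b_2$ and $\b_3$.

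\textbf{Step 2: both states are diagonal with the same marginals up to permutation.} Take the strings $(2,\dots,2)$ to get $\tr(\sigma_A^l)\,n=\tr(\rho_A^l)\,n$ for all $l$. Hence $\sigma_A$ and $\rho_A$ have the same spectrum, and likewise for $\sigma_B$ and $\rho_B$. Since $\sigma_A,\sigma_B$ are assumed diagonal, they are non-degenerate diagonal matrices, and there are permutations $P,Q$ with $\sigma_A=P\rho_AP^\dg$ and $\sigma_B=Q\rho_BQ^\dg$. Lemma \ref{le:diagonal}(i) then shows that both $\rho_{AB}$ and $\sigma_{AB}$ are diagonal. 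Replace $\sigma_{AB}$ by $\sigma'=(P\ox Q)^\dg\sigma_{AB}(P\ox Q)$. This is LP equivalent to $\sigma_{AB}$, still matched with $\rho_{AB}$, diagonal, and has marginals exactly $\rho_A$ and $\rho_B$. It now suffices to prove $\sigma'=\rho_{AB}$.

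\textbf{Step 3: reduce to a commuting-family spectral statement.} All six operators $\a_1,\a_2,\a_3,\b_1,\b_2,\b_3$ are diagonal in the product basis $\ket{i}\ket{k}$. Write $a_i,b_k$ for the diagonal entries of $\rho_A,\rho_B$, and $r_{ik},s_{ik}$ for those of $\rho_{AB},\sigma'$. The matched conditions in the standard form \eqref{eq:bi-sta} read
\[
\sum_{i,k} r_{ik}^p a_i^q b_k^r=\sum_{i,k} s_{ik}^p a_i^q b_k^r,\qquad \forall p,q,r\ge 0 .
\]
So the two finite measures on $\bbR^3$ given by the point masses at $(r_{ik},a_i,b_k)$ and at $(s_{ik},a_i,b_k)$ have all the same mixed moments. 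Finitely supported measures on $\bbR^3$ with equal moments are equal, since polynomials separate finitely many points. Hence the multisets $\{(r_{ik},a_i,b_k)\}$ and $\{(s_{ik},a_i,b_k)\}$ coincide.

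\textbf{Step 4: read off the entries.} The $a_i$ are distinct and the $b_k$ are distinct, so the pair $(a_i,b_k)$ determines $(i,k)$ uniquely. The multiset equality therefore forces $r_{ik}=s_{ik}$ for every $(i,k)$, i.e.\ $\sigma'=\rho_{AB}$. Consequently $\sigma_{AB}=(P\ox Q)\rho_{AB}(P\ox Q)^\dg$ is LP equivalent to $\rho_{AB}$.

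The main obstacle is Step 3. One must justify that the matched LUBIs specialize to the standard form \eqref{eq:bi-sta}; this holds because the $\b_j$ commute, so every word in them reduces to $\b_1^p\b_2^q\b_3^r$. One must also justify the moment-determinacy argument. For the latter, a clean route is to choose Lagrange-interpolation polynomials $f(x)g(y)h(z)$ that isolate a single point $(a_i,b_k)$ together with a candidate value of $r$. This also shows that finitely many LUBIs suffice.
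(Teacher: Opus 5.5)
Your proof is correct, and its skeleton matches the paper's: CLC via Lemma~\ref{le:pcommute}, equal marginal spectra, a local permutation aligning the marginals, then diagonality via Lemma~\ref{le:diagonal}(i). It differs from the paper in two places.

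First, you organize the argument as the cycle LP $\Rightarrow$ LU $\Rightarrow$ matched $\Rightarrow$ LP, so the equivalence of LU and LP comes for free. The paper instead proves LP $\Leftrightarrow$ matched and then separately invokes Lemma~\ref{le:diagonal}(ii). That lemma's proof shows any connecting local unitary has the form $(P\Lambda_A)\ox(Q\Lambda_B)$ with diagonal phases. This is extra structural information that your route neither needs nor produces.

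Second, for the final identification the paper fixes $p=1$ and solves $(M_m\ox N_n)\hat X=0$ with invertible Vandermonde factors, using exactly $mn$ LUBIs. You instead use all $p$ and moment-determinacy of finitely supported measures on $\bbR^3$. This is valid, but the first coordinate plays no role. Since $(a_i,b_k)$ already pins down $(i,k)$, taking $p=1$ and Lagrange polynomials $f(a)g(b)$ isolating $(a_i,b_k)$ gives $r_{ik}=s_{ik}$ directly. That is precisely the paper's Vandermonde inversion in different language.

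Finally, the step you flag as the main obstacle is not one. The equalities $\tr[\a_1^p\a_2^q\a_3^r]=\tr[\b_1^p\b_2^q\b_3^r]$ are themselves matched LUBIs, namely strings of the form $(1,\dots,1,2,\dots,2,3,\dots,3)$. So no commutativity is needed to obtain them. Commutativity matters only for the converse claim that these equalities exhaust all LUBIs.
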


\begin{proof}
We first show that $\rho_{AB}$ and $\sigma_{AB}$ are LP equivalent if and only if they are matched. 
The ``Only if'' part is obvious. For the ``If'' part, assume that $\rho_{AB}$ and $\sigma_{AB}$ are matched, i.e., their LUBIs are all matched. First, by Lemma \ref{le:pcommute} we determine that $\sigma_{AB}$ is also a CLC state.

Second, due to the matched LUBIs formulated by $\tr[(\rho_A\ox I_B)^k]=\tr[(\sigma_A\ox I_B)^k]$ for all $k\geq 0$, it follows from the Newton identities that $\rho_A$ and $\sigma_A$ share the identical eigenvalues. Similarly, $\rho_B$ and $\sigma_B$ share the identical eigenvalues too. Then there exist permutations $P,Q$ acting on systems $A,B$ respectively, such that $\sigma_A=P\rho_A P^\dg$ and $\sigma_B=Q\rho_B Q^\dg$, as the marginals $\rho_A,\rho_B,\sigma_A,\sigma_B$ are all diagonal. Let $\tilde{\rho}_{AB}=(P\ox Q)\rho_{AB}(P\ox Q)^\dg$. Then up to a local permutation $P\ox Q$ we may assume that $\rho_{AB}$ and $\sigma_{AB}$ share the same two marginals, where $\rho_{AB}$ and $\sigma_{AB}$ remain both CLC states by Lemma \ref{le:com-inv}. 

Third, since the two marginals of $\rho_{AB}$ are both non-degenerate and $\sigma_{AB}$ shares the same marginals as $\rho_{AB}$, we may assume that $\rho_A=\sigma_A=\diag(a_1,\cdots,a_m)$ and $\rho_B=\sigma_B=\diag(b_1,\cdots,b_n)$, where $a_1,\cdots,a_m$ are distinct and $b_1,\cdots,b_n$ are distinct. It follows by Lemma \ref{le:diagonal} (i) that $\rho_{AB}$ and $\sigma_{AB}$ are both diagonal matrices. Then let $\rho_{AB}=\diag(\lambda_1,\lambda_2,\cdots,\lambda_{mn})$ and $\sigma_{AB}=\diag(\mu_1,\mu_2,\cdots,\mu_{mn})$. By calculation we obtain that
\beq
\label{eq:diag-m-3}
\bal
&\rho_A\ox\I_B=\sigma_A\ox\I_B \\ =&\diag(\overbrace{a_1,\cdots,a_1}^n,\overbrace{a_2,\cdots, a_2}^n,\cdots,\overbrace{a_m,\cdots,a_m}^n), \\
&\I_A\ox\rho_B=\I_A\ox\sigma_B \\
=&\diag(\underbrace{\underbrace{b_1,b_2,\cdots,b_n},\underbrace{b_1,b_2\cdots,b_n},\cdots,\underbrace{b_1,b_2,\cdots,b_n}}_m).
\eal
\eeq

Fourth, due to the local commutativity, the criteria given by Eq. \eqref{eq:cj-1} simplifies to 
\beq
\label{eq:cj-1-c}
\tr[\a_1^p\alpha_2^q\alpha_3^r]=\tr[\b_1^p\b_2^q\b_3^r]
\eeq
for any integers $p,q,r\geq 0$. Since $\alpha_2,\beta_2$ are two identical diagonal matrices, and $\alpha_3,\beta_3$ are two identical diagonal matrices, Eq. \eqref{eq:cj-1-c} tells that the following equality holds for any integers $p,q,r\geq 0$,
\beq
\label{eq:cj-1-c.1}
\bal
&\sum_{k=1}^m \sum_{l=1}^n (a_k)^q(b_l)^r(\lambda_{(k-1)n+l})^p \\
=&\sum_{k=1}^m \sum_{l=1}^n (a_k)^q(b_l)^r(\mu_{(k-1)n+l})^p.
\eal
\eeq
Fix $p=1$. Then the following equality,
\beq
\label{eq:cj-1-c.2}
\sum_{k=1}^m \sum_{l=1}^n (a_k)^q(b_l)^r\big(\lambda_{(k-1)n+l}-\mu_{(k-1)n+l}\big)=0,
\eeq
holds for any integers $q,r\geq 0$. 

Finally, we show that $\lambda_{(k-1)n+l}=\mu_{(k-1)n+l}$ for any $1\leq k\leq m$ and $1\leq l\leq n$. Eq. \eqref{eq:cj-1-c.2} can be taken as a system of linear equations, where the variables are $(\lambda_j-\mu_j)$'s. By selecting $q=0,1,\cdots,m-1$ and $r=0,1,\cdots,n-1$, a part of the system of linear equations can be formulated as
\beq
\label{eq:cj-1-c.3}
(M_m\ox N_n)\hat{X}=0,
\eeq
where
\begin{widetext}
\beq
\label{eq:cj-1-c.4}
M_m=
\bma
1 & 1 & \ldots & 1 \\
a_1 & a_2 & \ldots & a_m \\
\vdots & \vdots & \vdots & \vdots \\
a_1^{m-1} & a_2^{m-1} & \ldots & a_m^{m-1}
\ema,~
N_n=
\bma
1 & 1 & \ldots & 1 \\
b_1 & b_2 & \ldots & b_n \\
\vdots & \vdots & \vdots & \vdots \\
b_1^{n-1} & b_2^{n-1} & \ldots & b_n^{n-1}
\ema,~
\hat{X}=
\bma
\lambda_1-\mu_1 \\
\lambda_2-\mu_2 \\
\vdots \\
\lambda_{mn}-\mu_{mn}
\ema.
\eeq
\end{widetext}
It is known that $M_m\ox N_n$ is invertible if and only if $M_m$ and $N_n$ are both invertible. Thus, $\lambda_j=\mu_j$ holds for any $j$, if and only if $\det(M_m)$ and $\det(N_n)$ are both nonzero. By the assumption that $a_1,\cdots,a_m$ are distinct and $b_1,\cdots,b_n$ are distinct, the Vandermonde determinants $\det(M_m)$ and $\det(N_n)$ are both nonzero. Thus we derive that $\lambda_j=\mu_j$ for any $1\leq j\leq mn$. Recall that $\rho_{AB}=\diag(\lambda_1,\lambda_2,\cdots,\lambda_{mn})$ up to some local permutation, and $\sigma_{AB}=\diag(\mu_1,\mu_2,\cdots,\mu_{mn})$. We then conclude that $\rho_{AB}$ and $\sigma_{AB}$ are the same up to a local permutation. In other words, they are LP equivalent.

Next, we show that the LU equivalence is indeed the same as the LP equivalence for $\rho_{AB}$ and $\sigma_{AB}$. According to the analysis above we obtain that the two diagonal marginals of $\sigma_{AB}$ are also non-degenerate if it is matched with $\rho_{AB}$. Then by Lemma \ref{le:diagonal} (ii) we conclude that $\rho_{AB}$ and $\sigma_{AB}$ are LU equivalent if and only if they are LP equivalent when the two states are matched. 
This completes the proof.
\end{proof}

Based on the above conclusion on bipartite diagonal states, we give an explicit example, in terms of two CLC diagonal states, illustrating the additional discriminatory power of local-unitary Bargmann invariants over several commonly used necessary invariants for local-unitary equivalence, including the global spectrum, the spectra of the reduced states, and the operator-Schmidt coefficients. More precisely, we consider the following two diagonal states on $\mathbb C^3\otimes\mathbb C^3$.

\begin{example}
\label{ex:adv}
Consider the following two states,
\beq
\label{eq:exadv-1}
\bal
\rho_{AB}&=\frac{1}{45}\diag (1,2,3,4,6,5,8,7,9), \\
\sigma_{AB}&=\frac{1}{45}\diag (1,2,3,5,4,6,7,9,8)
\eal
\eeq
with the same global spectra. Their marginals
\beq
\label{eq:exadv-2}
\bal
\rho_A&=\sigma_A=\diag (\frac{2}{15},\frac{1}{3},\frac{8}{15}),\\
\rho_B&=\sigma_B=\diag (\frac{13}{45},\frac{15}{45},\frac{17}{45})
\eal
\eeq
are identical and non-degenerate, and with the same operator-Schmidt coefficients. 

Since $\rho_{AB}$ and $\sigma_{AB}$ are diagonal and thus necessarily CLC states, it follows from Proposition \ref{thm:distinct} that $\rho_{AB}$ and $\sigma_{AB}$ are LU equivalent if and only if all the LUBIs are matched, as all the marginals of $\rho_{AB}$ and $\sigma_{AB}$ are non-degenerate. Here we show that there is a fourth-order LUBI that is distinct for $\rho_{AB}$ and $\sigma_{AB}$. We have the following two LUBIs associated with the same string for $\rho_{AB}$ and $\sigma_{AB}$,
\beq
\label{eq:exadv-3}
\bal
\Delta_{1233}(\rho)=\tr[\rho_{AB}(\rho_A\ox\I_B)(\I_A\ox\rho_B)^2]=\frac{193653}{45^4},\\
\Delta_{1233}(\sigma)=\tr[\sigma_{AB}(\sigma_A\ox\I_B)(\I_A\ox\sigma_B)^2]=\frac{193581}{45^4}.
\eal
\eeq
These two LUBIs are distinct, and thus $\rho_{AB}$ and $\sigma_{AB}$ are not LU equivalent by Proposition \ref{thm:distinct}.
\end{example}

The above example demonstrates the discriminatory power of local-unitary Bargmann invariants relative to the common standard necessary invariants. We may summarize the example as follows:
\[
\begin{array}{c|c}
\text{quantity} & \rho_{AB}\ \text{and}\ \sigma_{AB} \\
\hline
\text{global spectrum} & \text{identical} \\
\rho_A,\ \sigma_A & \text{identical and nondegenerate} \\
\rho_B,\ \sigma_B & \text{identical and nondegenerate} \\
\text{operator-Schmidt coefficients} & \text{identical} \\
\text{Bargmann invariants of order }\leq 3 & \text{identical} \\
\Delta_{1233} & \text{different}
\end{array}
\]

Based on the above proposition and specific example, we remark that it suffices to employ finite matched LUBIs to completely characterize the LU orbits of some bipartite CLC states with two diagonal non-degenerate marginals. In this case, the criteria based on the matched LUBIs can be simplified to Eq. \eqref{eq:cj-1-c}. According to the analysis below Eq. \eqref{eq:cj-1-c}, by selecting $p=1$, $0\leq q\leq m-1$ and $0\leq r\leq n-1$, the corresponding matched LUBIs are sufficient to determine whether $\rho_{AB}$ and $\sigma_{AB}$ are LU equivalent or not. By applying proper LU operators we generalize the above result to the bipartite CLC states whose marginals may be not diagonal in the following theorem.

\begin{theorem}
\label{thm:comm+sim}
Let $\rho_{AB}$ be a CLC state and its two marginals both are non-degenerate. For some state $\sigma_{AB}$, $\rho_{AB}$ and $\sigma_{AB}$ are LU equivalent if and only if the two states are matched.
\end{theorem}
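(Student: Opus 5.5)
The ``only if'' direction is immediate: if $\sigma_{AB}=(U\ox V)\rho_{AB}(U\ox V)^\dg$, then $\sigma_A=U\rho_AU^\dg$ and $\sigma_B=V\rho_BV^\dg$. Hence every $\b_j$ equals $(U\ox V)\a_j(U\ox V)^\dg$, and each product $\b_{j_1}\cdots\b_{j_l}$ is a unitary conjugate of $\a_{j_1}\cdots\a_{j_l}$. The two traces in Eq. \eqref{eq:cj-1} therefore coincide, and the states are matched.

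For the ``if'' direction, the plan is to reduce to Proposition \ref{thm:distinct} by diagonalizing the marginals locally.

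\textbf{Step 1 (diagonalize $\rho$).} Pick unitaries $U_0,V_0$ with $U_0\rho_AU_0^\dg$ and $V_0\rho_BV_0^\dg$ diagonal, and set $\rho'=(U_0\ox V_0)\rho_{AB}(U_0\ox V_0)^\dg$. The marginals of $\rho'$ are exactly $U_0\rho_AU_0^\dg$ and $V_0\rho_BV_0^\dg$, which are diagonal and still non-degenerate. By Lemma \ref{le:com-inv}, $\rho'$ is again CLC.

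\textbf{Step 2 (diagonalize $\sigma$).} Since $\sigma_{AB}$ is matched with $\rho_{AB}$, the observation following Lemma \ref{le:pcommute} shows that $\sigma_{AB}$ is also CLC. Choose unitaries $U_1,V_1$ diagonalizing $\sigma_A$ and $\sigma_B$, and set $\sigma'=(U_1\ox V_1)\sigma_{AB}(U_1\ox V_1)^\dg$. Then $\sigma'$ has diagonal marginals.

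\textbf{Step 3 (transfer matching).} By the ``only if'' computation above, LUBIs are invariant under local unitaries. So $\rho'$ is matched with $\rho_{AB}$, which is matched with $\sigma_{AB}$, which is matched with $\sigma'$, and hence $\rho'$ and $\sigma'$ are matched. We are now exactly in the setting of Proposition \ref{thm:distinct}: $\rho'$ is a CLC state with two non-degenerate diagonal marginals, and $\sigma'$ has two diagonal marginals. That proposition gives that $\rho'$ and $\sigma'$ are LU (indeed LP) equivalent, say $\sigma'=(P\ox Q)\rho'(P\ox Q)^\dg$. Composing the local unitaries,
\[
\sigma_{AB}=\bigl((U_1^\dg PU_0)\ox(V_1^\dg QV_0)\bigr)\,\rho_{AB}\,\bigl((U_1^\dg PU_0)\ox(V_1^\dg QV_0)\bigr)^\dg,
\]
which is an LU equivalence.

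The main point to check is Step 3: that matching is preserved under independent local unitary changes applied to $\rho$ and to $\sigma$, and that the hypotheses of Proposition \ref{thm:distinct} really hold. The first part follows from the conjugation identity $\b_j\mapsto W\b_jW^\dg$ with $W$ local, which holds because the partial trace intertwines local conjugations. For the second part, Proposition \ref{thm:distinct} only requires $\sigma'$ to have diagonal marginals. Within its proof, the spectra of the marginals are recovered from $\tr[(\sigma_A\ox\I_B)^k]$ via the Newton identities, and the CLC property of $\sigma'$ is recovered from Lemma \ref{le:pcommute}. Together these make the reduction complete.
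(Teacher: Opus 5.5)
Your proof is correct and follows essentially the same route as the paper. You diagonalize the marginals of both states by local unitaries, keep the CLC property via Lemma~\ref{le:com-inv}, carry the matching over using the local-unitary invariance of the LUBIs, apply Proposition~\ref{thm:distinct} to get LP equivalence, and then compose back to an LU equivalence. Compared with the paper, you are more explicit about why matching survives independent local unitaries on $\rho_{AB}$ and $\sigma_{AB}$ and about the final composition of unitaries. You also correctly observe that only diagonality of the marginals of $\sigma'$ has to be arranged by hand, since non-degeneracy then follows inside Proposition~\ref{thm:distinct}.
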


\begin{proof}
The ``Only if'' part is obvious. We show the ``If'' part. Assume that $\rho_{AB}$ and $\sigma_{AB}$ are matched, i.e., their LUBIs are all matched. By Lemma \ref{le:pcommute} the two bipartite states are both CLC states if one of the two is a CLC state and they share all the matched LUBIs. Up to LU equivalence we may assume that the marginals of $\tilde{\rho}_{AB}$ and $\tilde{\sigma}_{AB}$ all are diagonal and non-degenerate, where $\tilde{\rho}_{AB}$ and $\tilde{\sigma}_{AB}$ are LU equivalent to $\rho_{AB}$ and $\sigma_{AB}$, respectively. Also, $\tilde{\rho}_{AB}$ and $\tilde{\sigma}_{AB}$ are still CLC states according to Lemma \ref{le:com-inv}. Therefore, from Proposition \ref{thm:distinct}, finite matched LUBIs can ensure that $\tilde{\rho}_{AB}$ and $\tilde{\sigma}_{AB}$ are LP equivalent. It implies that $\rho_{AB}$ and $\sigma_{AB}$ are LU equivalent. This completes the proof.
\end{proof}

The local commutativity defined by Def. \ref{def:commute} can be directly extended to general multipartite Hermitian operators. The entanglement witnesses are special Hermitian operators, detecting effectively the entanglement of unknown quantum states. Thus, it is necessary to classify bipartite/multipartite EWs under LU equivalence for characterizing entanglement \cite{inertiays2020,rew2024,yiprr2025}. Actually the classification of EWs is closely connected to that of states. Inspired by the idea of comparing the LU decidabilities between bipartite states and EWs \cite{yiprr2025}, we may use the matched LUBIs correspondingly generated from the EWs to test the LU equivalence between bipartite EWs.

\begin{proposition}
\label{prop:biew}
Suppose that $W_{AB}$ is an EW supported on $\bbC^m\ox\bbC^n$ whose marginals are both non-degenerate, commuting with both $W_A\ox\I_B$ and $\I_A\ox W_B$, where $W_A=\tr_B W_{AB}$ and $W_B=\tr_A W_{AB}$. For another EW $X_{AB}$, the LU equivalence between $W_{AB}$ and $X_{AB}$ is equivalent to that between the two states $\rho_{AB}=\frac{1}{1+cmn}(W_{AB}+c\I_{mn})$ and $\sigma_{AB}=\frac{1}{1+cmn}(X_{AB}+c\I_{mn})$ for some $c>0$, whose LU equivalence can be verified by the matched LUBIs similarly given by Eq. \eqref{eq:cj-1}.
\end{proposition}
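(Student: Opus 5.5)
The plan is to reduce the statement about entanglement witnesses to Theorem \ref{thm:comm+sim} through the affine map $W\mapsto \frac{1}{1+cmn}(W+c\I_{mn})$. I will do this in three steps.

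\textbf{Step 1: the shifted operators are states.} An EW has $\tr W_{AB}$ finite. I will work with normalized witnesses, $\tr W_{AB}=1$ (and likewise $\tr X_{AB}=1$), since only then is $\tr\rho_{AB}=1$. If needed I will take this as part of the setup, or else renormalize by $\tr W_{AB}+cmn$. Choosing $c\geq \max\{|\lambda_{\min}(W_{AB})|,|\lambda_{\min}(X_{AB})|\}$ makes both $\rho_{AB}$ and $\sigma_{AB}$ positive semidefinite.

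\textbf{Step 2: LU equivalence transfers.} For any local unitary $U\ox V$ we have $(U\ox V)\I_{mn}(U\ox V)^\dg=\I_{mn}$. Hence $X_{AB}=(U\ox V)W_{AB}(U\ox V)^\dg$ holds if and only if $\sigma_{AB}=(U\ox V)\rho_{AB}(U\ox V)^\dg$, using the same $c$ on both sides. This already gives the claimed equivalence of the two LU problems. The same conclusion also follows from Lemma \ref{le:lu-equiv-g}: the shift moves every eigenvalue by $c$ and rescales it, but leaves the spectral projectors unchanged.

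\textbf{Step 3: the hypotheses of Theorem \ref{thm:comm+sim} hold for $\rho_{AB}$.} Computing the marginals gives
\[
\rho_A=\frac{1}{1+cmn}\left(W_A+cn\I_m\right),\qquad \rho_B=\frac{1}{1+cmn}\left(W_B+cm\I_n\right).
\]
These are positive affine images of $W_A$ and $W_B$, so they are non-degenerate exactly when $W_A$ and $W_B$ are. Moreover $\rho_A\ox\I_B$ is a linear combination of $W_A\ox\I_B$ and $\I_{mn}$, and both of these commute with $W_{AB}$ and with $\I_{mn}$. So $[\rho_{AB},\rho_A\ox\I_B]=0$, and by the same argument $[\rho_{AB},\I_A\ox\rho_B]=0$. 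Thus $\rho_{AB}$ is a CLC state with two non-degenerate marginals, and Theorem \ref{thm:comm+sim} says that $\rho_{AB}$ and $\sigma_{AB}$ are LU equivalent if and only if they are matched in the sense of Eq. \eqref{eq:cj-1}. Combined with Step 2, this settles the proposition.

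The main subtlety is in Step 1: fixing a normalization so that the shifted operators really have unit trace, and choosing a single $c$ that works for both witnesses. The remaining steps are direct computations.

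I will also remark that the matched LUBIs of $\rho_{AB}$ and $\sigma_{AB}$ expand, by multilinearity, into the multivariate traces of $(W_{AB},W_A\ox\I_B,\I_A\ox W_B,\I_{mn})$. So the criterion can equivalently be phrased through LUBIs generated directly from the EWs, and it is independent of the choice of $c$.
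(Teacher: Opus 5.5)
Your proposal is correct and follows essentially the same route as the paper. You shift by $c\I_{mn}$ with the common factor $\frac{1}{1+cmn}$, note that the shift commutes with conjugation by $U\ox V$, and check that local commutativity and non-degeneracy of the marginals pass to $\rho_{AB}$; your explicit formulas $\rho_A=\frac{1}{1+cmn}(W_A+cn\I_m)$ and $\rho_B=\frac{1}{1+cmn}(W_B+cm\I_n)$ make precise what the paper only asserts. You then invoke Theorem \ref{thm:comm+sim}.

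One small caution concerns your fallback normalization. If you renormalize each witness by its own trace, you must first check $\tr W_{AB}=\tr X_{AB}$. Otherwise the witnesses are not LU equivalent, yet the separately renormalized states can even coincide, e.g.\ for $X_{AB}=sW_{AB}+c(s-1)\I_{mn}$ with $s>1$.
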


\begin{proof}
First, there exists some $c>0$ such that both $W_{AB}+c\I_{mn}$ and $X_{AB}+c\I_{mn}$ are positive semidefinite, and thus $\rho_{AB}=\frac{1}{1+cmn}(W_{AB}+c\I_{mn})$ and $\sigma_{AB}=\frac{1}{1+cmn}(X_{AB}+c\I_{mn})$ are two normalized states. Then we conclude that the LU equivalence between two EWs $W_{AB}$ and $X_{AB}$ is equivalent to that between two states $\rho_{AB}$ and $\sigma_{AB}$. It is verified that that $\rho_{AB}$ is a CLC state if and only if $W_{AB}$ commutes with both $W_A\ox\I_B$ and $\I_A\ox W_B$, and the marginals of $\rho_{AB}$ are non-degenerate if and only if $W_A,W_B$ are correspondingly non-degenerate. Thus, by assumption the state $\rho_{AB}$ is a CLC state with two non-degenerate marginals. According to Theorem \ref{thm:comm+sim}, the LU equivalence between $\rho_{AB}$ and $\sigma_{AB}$ is verified by finite matched LUBIs, which is the same as the LU equivalence between $W_{AB}$ and $X_{AB}$. This completes the proof.
\end{proof}

\subsection{Identifying multipartite LU equivalence}
\label{subsec:multi-lu}

We use the matched LUBIs formulated in \eqref{eq:cj-2m} to further test the LU equivalence between multipartite states. We first consider the tripartite case and show that the matched LUBIs are sufficient to verify the LU equivalence between tripartite CLC states with non-degenerate single-party marginals, and the show that this assertion can be directly generalized to the $n$-partite states for any $n\geq 3$. We first characterize the multipartite CLC states with diagonal single-party marginals, by generalizing Lemma \ref{le:diagonal} to the multipartite case.

\begin{lemma}
\label{le:tri-diag}
If $X_{A_1\cdots A_n}$ is an $n$-partite Hermitian operator which commutes with $Y_j\ox \I_{A_{j^c}}$ for any $1\leq j\leq n$, where $Y_j$ is a diagonal and non-degenerate operator acting on the system $A_j$, then $X_{A_1\cdots A_n}$ must be diagonal.
\end{lemma}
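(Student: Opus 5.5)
We argue entrywise in the computational product basis $\{\ket{k_1}\ox\cdots\ox\ket{k_n}\}$, where $1\le k_j\le d_j:=\dim\cH_{A_j}$. The idea is the same as in Lemma \ref{le:diagonal} (i), but we avoid the nested block structure. Write $Y_j=\diag(y^{(j)}_1,\dots,y^{(j)}_{d_j})$ with distinct entries $y^{(j)}_1,\dots,y^{(j)}_{d_j}$. Then each $Z_j:=Y_j\ox\I_{A_{j^c}}$ is diagonal in the product basis:
\beq
Z_j\ket{k_1,\dots,k_n}=y^{(j)}_{k_j}\ket{k_1,\dots,k_n}.
\eeq

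For two multi-indices $\mathbf{k}=(k_1,\dots,k_n)$ and $\mathbf{l}=(l_1,\dots,l_n)$, we take the $(\mathbf{k},\mathbf{l})$ entry of the identity $Z_jX=XZ_j$, where $X=X_{A_1\cdots A_n}$. This gives
\beq
\bigl(y^{(j)}_{k_j}-y^{(j)}_{l_j}\bigr)\bra{\mathbf{k}}X\ket{\mathbf{l}}=0, \qquad 1\le j\le n.
\eeq

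Now suppose $\mathbf{k}\neq\mathbf{l}$. Then there is some index $j$ with $k_j\neq l_j$. Non-degeneracy of $Y_j$ gives $y^{(j)}_{k_j}\neq y^{(j)}_{l_j}$, so the displayed identity for that $j$ forces $\bra{\mathbf{k}}X\ket{\mathbf{l}}=0$. Hence every off-diagonal entry of $X$ vanishes, and $X$ is diagonal.

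An alternative route is induction on $n$, treating $A_1$ versus $A_2\cdots A_n$ as a bipartition and applying the block argument of Lemma \ref{le:diagonal} (i). That route has to check that the diagonal blocks still commute with the remaining $Y_j$'s. The direct entrywise argument avoids this, so we use it.

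There is no genuine obstacle. The only point needing care is the indexing convention, namely that $Y_j\ox\I_{A_{j^c}}$ denotes $Y_j$ placed in the $j$th tensor slot. With that convention the product basis diagonalizes all $Z_j$ simultaneously, and it separates distinct basis vectors because each $Y_j$ is non-degenerate. Hermiticity of $X$ is not even needed.
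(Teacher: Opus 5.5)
Your argument is correct, but it is organized differently from the paper's proof, which proceeds by induction on $n$. The paper takes Lemma \ref{le:diagonal} (i) as the base case and views the system as $A_1|A_2\cdots A_n$. It then uses $[X,Y_1\otimes\I]=0$ to kill the off-diagonal blocks $A_{ij}$, notes that each diagonal block $A_{jj}$ then commutes with the remaining $Y_k\otimes\I$ ($2\le k\le n$), and applies the $(n-1)$-partite hypothesis to each block.

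Your entrywise identity $(y^{(j)}_{k_j}-y^{(j)}_{l_j})\bra{\mathbf{k}}X\ket{\mathbf{l}}=0$ treats all tensor slots at once. You therefore need neither the induction nor the check that commutation passes to the diagonal blocks. The paper states that step without detail; it holds because $Y_k\otimes\I$ for $k\ge 2$ has the form $\I_{A_1}\otimes(\cdot)$, so it is block-diagonal with identical blocks.

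Your version also makes it clear that Hermiticity of $X$ is irrelevant and that the $Y_j$ need not be marginals of $X$. The latter is a small advantage: the paper's base case cites Lemma \ref{le:diagonal} (i), which is stated for the marginals $\tr_B M_{AB}$ and $\tr_A M_{AB}$, although its proof never uses this.

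The paper's inductive presentation, in turn, reuses the bipartite lemma and the block-matrix picture, matching its overall strategy of lifting bipartite results to the multipartite case. Both arguments rest on the same fact: commuting with a diagonal matrix with distinct entries forces every entry joining two distinct eigenvalues to vanish.
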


\begin{proof}
We show that Lemma \ref{le:tri-diag} holds for any $n\geq2$ by mathematical induction. First, the assertion holds for $n=2$ by Lemma \ref{le:diagonal} (i). Second, we assume that the assertion holds for any $(n-1)$-partite system with $n\geq 3$. Finally, we show the assertion holds for the $n$-partite system by the above assumption. Suppose that $X_{A_1\cdots A_n}$ is supported on $\bigox_{i=1}^n\bbC^{d_i}$, where $d_i$ denotes the dimension of subsystem $A_i$. Then we may write $X_{A_1\cdots A_n}$ in a form of block matrix,
\beq
\label{eq:tridiag-m-1}
X_{A_1\cdots A_n}=
\bma
A_{11} & A_{12} & \cdots & A_{1d_1} \\
A_{12}^\dg & A_{22} & \cdots & A_{2d_1} \\
\vdots & \vdots & \cdots & \vdots \\
A_{1d_1}^\dg & A_{2d_1}^\dg & \cdots & A_{d_1d_1}
\ema,
\eeq
where each $A_{ij}$ is an $(n-1)$-partite matrix supported on $\bigox_{i=2}^{n}\bbC^{d_i}$. Under the bipartition $A_1|(A_2\cdots A_n)$, it follows from $[X_{A_1\cdots A_n},Y_1\ox \I_{A_2\cdots A_n}]=0$ that the non-diagonal blocks in \eqref{eq:tridiag-m-1} are all zero blocks, as $Y_1$ is a diagonal matrix with distinct diagonal entries. Further, since $X_{A_1\cdots A_n}$ commutes with $Y_k\ox \I_{A_{k^c}}$ for each $2\leq k\leq n$, it implies that each diagonal block $A_{jj}$ commutes with every $Y_k\ox\I_{A_{\{1,k\}^c}}$ for $2\leq k \leq n$, where $A_{\{1,k\}^c}$ denotes an $(n-1)$-partite system without subsystems $A_1$ and $A_k$. According to the assumption for the $(n-1)$-partite system, we conclude that each $(n-1)$-partite operator $A_{jj}$ must be diagonal, as every $Y_j$ is diagonal and non-degenerate. It means that $X_{A_1\cdots A_n}$ is diagonal and thus the assertion holds for the $n$-partite system. By mathematical induction we conclude that Lemma \ref{le:tri-diag} holds for any $n\geq 2$. This completes the proof.
\end{proof}

Next, we take the tripartite case as an example to illustrate the results on multipartite systems. We generalize the Proposition \ref{thm:distinct} to tripartite systems.

\begin{proposition}
    \label{le:tri-lp}
    Suppose that $\rho_{ABC}$ is a CLC state with three single-party marginals that are all diagonal and non-degenerate. For some state $\sigma_{ABC}$ with three single-party marginals that are all diagonal, $\sigma_{ABC}$ are LP equivalent to $\rho_{ABC}$ if and only if they are matched.
\end{proposition}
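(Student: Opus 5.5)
The plan is to mirror the proof of Proposition \ref{thm:distinct}, now with three single-party marginals, and to use Lemma \ref{le:tri-diag} in place of Lemma \ref{le:diagonal} (i). The ``only if'' direction is immediate. If $\sigma_{ABC}=(P_1\ox P_2\ox P_3)\rho_{ABC}(P_1\ox P_2\ox P_3)^\dg$, then each marginal $\sigma_{A_\cS}$ is obtained from $\rho_{A_\cS}$ by conjugation with the corresponding tensor factor of permutations. So every $\b_j$ equals $(P_1\ox P_2\ox P_3)\a_j(P_1\ox P_2\ox P_3)^\dg$, and every trace in Eq. \eqref{eq:cj-2m} is preserved.

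For the ``if'' direction, assume $\rho_{ABC}$ and $\sigma_{ABC}$ are matched. The argument runs in four steps.

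\textbf{Step 1 (CLC property transfers).} By Lemma \ref{le:pcommute} and the remark following it, $\sigma_{ABC}$ commutes with $\sigma_A\ox\I_{BC}$, with $\I_A\ox\sigma_B\ox\I_C$ and with $\I_{AB}\ox\sigma_C$. Hence $\sigma_{ABC}$ is also CLC.

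\textbf{Step 2 (align the marginals).} The matched invariants $\tr[(\rho_{A_j}\ox\I)^k]=\tr[(\sigma_{A_j}\ox\I)^k]$ for all $k\geq0$, together with the Newton identities, show that $\rho_{A_j}$ and $\sigma_{A_j}$ have the same spectrum for each $j$. In particular $\sigma_{A_j}$ is non-degenerate. Both marginals are diagonal, so there are permutations $P_j$ with $\sigma_{A_j}=P_j\rho_{A_j}P_j^\dg$. Replacing $\rho_{ABC}$ by $(P_1\ox P_2\ox P_3)\rho_{ABC}(P_1\ox P_2\ox P_3)^\dg$ changes neither the LP class nor the LUBIs, and it keeps the state CLC by Lemma \ref{le:com-inv}. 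We may therefore assume $\rho_A=\sigma_A=\diag(a_1,\dots,a_{d_1})$, $\rho_B=\sigma_B=\diag(b_1,\dots,b_{d_2})$ and $\rho_C=\sigma_C=\diag(c_1,\dots,c_{d_3})$, each with distinct entries. Lemma \ref{le:tri-diag} then forces $\rho_{ABC}=\diag(\lambda_{klt})$ and $\sigma_{ABC}=\diag(\mu_{klt})$.

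\textbf{Step 3 (linear system).} Since all the operators involved are now commuting diagonal matrices, the relevant matched LUBIs include
\beq
\tr\left[\rho_{ABC}(\rho_A\ox\I_{BC})^q(\I_A\ox\rho_B\ox\I_C)^r(\I_{AB}\ox\rho_C)^s\right]=\tr\left[\sigma_{ABC}(\sigma_A\ox\I_{BC})^q(\I_A\ox\sigma_B\ox\I_C)^r(\I_{AB}\ox\sigma_C)^s\right].
\eeq
This gives $\sum_{k,l,t}a_k^q b_l^r c_t^s(\lambda_{klt}-\mu_{klt})=0$ for all $q,r,s\geq0$. Restrict to $0\leq q<d_1$, $0\leq r<d_2$ and $0\leq s<d_3$. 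The system then becomes $(M_{d_1}\ox N_{d_2}\ox L_{d_3})\hat X=0$, where $M_{d_1}$, $N_{d_2}$, $L_{d_3}$ are Vandermonde matrices as in Eq. \eqref{eq:cj-1-c.4}.

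\textbf{Step 4 (conclude).} Each Vandermonde matrix is invertible because its nodes are distinct, so the triple tensor product is invertible. Hence $\hat X=0$, that is, $\lambda_{klt}=\mu_{klt}$ for all indices. Undoing the permutation from Step 2 shows that $\sigma_{ABC}$ is LP equivalent to the original $\rho_{ABC}$.

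The step that needs the most care is Step 2. One must verify that the relabelled $\rho_{ABC}$ and $\sigma_{ABC}$ both satisfy the hypotheses of Lemma \ref{le:tri-diag}, which requires commutation with their own marginals, and those marginals must be diagonal and non-degenerate. Step 1 supplies this for $\sigma_{ABC}$, and the shared spectra supply the non-degeneracy of its marginals. Once both states are known to be diagonal, the rest is the Vandermonde argument.
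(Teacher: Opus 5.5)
Your proposal is correct and follows essentially the same route as the paper's proof. You transfer the CLC property to $\sigma_{ABC}$ via Lemma \ref{le:pcommute}, align the diagonal marginals by a local permutation using the shared spectra, apply Lemma \ref{le:tri-diag} to make both states diagonal, and conclude from the invertible tensor product of Vandermonde matrices in the system given by $\tr[\rho_{ABC}(\rho_A^q\ox\rho_B^r\ox\rho_C^s)]$; your explicit check that $\sigma_{ABC}$ satisfies the hypotheses of Lemma \ref{le:tri-diag} fills in a detail the paper leaves implicit.
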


\begin{proof}
The ``Only if'' part is direct. We show the ``If'' part. Assume that $\rho_{ABC}$ and $\sigma_{ABC}$ supported on $\bbC^m\ox\bbC^n\ox\bbC^l$ share all the matched LUBIs. Since $\rho_{ABC}$ is a CLC state, we conclude that $\sigma_{ABC}$ is also a CLC state by Lemma \ref{le:pcommute}.

Similar to the bipartite case, we obtain that $\rho_A,\rho_B,\rho_C$ share the identical eigenvalues as $\sigma_A,\sigma_B,\sigma_C$, respectively, due to the matched LUBIs like $\tr[(\rho_A\ox\I_{BC})^k]=\tr[(\sigma_A\ox\I_{BC})^k]$ for all $k\geq 0$. Then up to a tripartite local permutation we may assume that $\rho_{ABC}$ share the same three single-party marginals as $\sigma_{ABC}$, namely,
    \beq
    \label{eq:tri-lp-1}
    \bal
    \rho_A&=\sigma_A=\diag(a_1,\cdots,a_m), \\ \rho_B&=\sigma_B=\diag(b_1,\cdots,b_n), \\
   \rho_C&=\sigma_C=\diag(c_1,\cdots,c_l),
    \eal
    \eeq
where the diagonal entries in each diagonal matrix are distinct.

It suffices to consider the following matched LUBIs given in \eqref{eq:cj-2m},
\beq
\label{eq:tri-lp-2}
\tr[\rho_{ABC}(\rho_A^p\ox\rho_B^q\ox\rho_C^r)]=\tr[\sigma_{ABC}(\sigma_A^p\ox\sigma_B^q\ox\sigma_C^r)].
\eeq
By the condition of commutativity, it follows from Lemma \ref{le:tri-diag} that $\rho_{ABC}$ and $\sigma_{ABC}$ are both diagonal. Let $\rho_{ABC}=\diag(\lambda_1,\cdots,\lambda_{mnl})$ and $\sigma_{ABC}=\diag(\mu_1,\cdots,\mu_{mnl})$. By fixing $p=0,\cdots,m-1$, $q=0,\cdots,n-1$ and $r=0,\cdots,l-1$, the equalities, $\tr[\rho_{ABC}(\rho_A^p\ox\rho_B^q\ox\rho_C^r)]-\tr[\sigma_{ABC}(\sigma_A^p\ox\sigma_B^q\ox\sigma_C^r)]=0$, make up the system of linear equations formulated by
\beq
\label{eq:tri-lp-3}
(M_m\ox N_n\ox L_l) \hat{X}=0,
\eeq
where
\beq
\label{eq:tri-lp-3.1}
\bal
M_m&=
\bma
1 & 1 & \ldots & 1 \\
a_1 & a_2 & \ldots & a_m \\
\vdots & \vdots & \vdots & \vdots \\
a_1^{m-1} & a_2^{m-1} & \ldots & a_m^{m-1}
\ema,\\
N_n&=
\bma
1 & 1 & \ldots & 1 \\
b_1 & b_2 & \ldots & b_n \\
\vdots & \vdots & \vdots & \vdots \\
b_1^{n-1} & b_2^{n-1} & \ldots & b_n^{n-1}
\ema, \\
L_l&=
\bma
1 & 1 & \ldots & 1 \\
c_1 & c_2 & \ldots & c_l \\
\vdots & \vdots & \vdots & \vdots \\
c_1^{l-1} & c_2^{l-1} & \ldots & c_l^{l-1}
\ema, \\
\hat{X}&=
\bma
\lambda_1-\mu_1 \\
\lambda_2-\mu_2 \\
\vdots \\
\lambda_{mnl}-\mu_{mnl}
\ema.
\eal
\eeq
Since the tuples $(a_1,\cdots,a_m),(\b_1,\cdots,b_n),(c_1,\cdots,c_l)$ respectively have distinct entries, we obtain that $\hat{X}$ must be zero from $\det(M_m\ox N_n\ox L_l)\neq 0$. Thus, we derive that $\rho_{ABC}=\sigma_{ABC}$. Taking the local permutation into account, we conclude that $\rho_{ABC}$ is LP equivalent to $\sigma_{ABC}$.
\end{proof}

Based on Proposition \ref{le:tri-lp} we generalize the LP equivalence to the LU equivalence by extending the diagonal marginals to general marginals.


\begin{theorem}
\label{thm:tri-lu}
Suppose that $\rho_{ABC}$ is a CLC state with three single-party marginals that are all non-degenerate. A state $\sigma_{ABC}$ is LU equivalent to $\rho_{ABC}$ if and only if the two states are matched.
\end{theorem}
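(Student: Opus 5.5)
The ``only if'' direction is immediate. If $\sigma_{ABC}=(U\ox V\ox W)\rho_{ABC}(U\ox V\ox W)^\dg$, then every marginal $\sigma_{A_\cS}$ is obtained from $\rho_{A_\cS}$ by the corresponding local unitary. Hence each $\b_j$ equals $(U\ox V\ox W)\a_j(U\ox V\ox W)^\dg$ for the same global unitary, and every trace in Eq. \eqref{eq:cj-2m} is preserved.

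For the ``if'' direction I would mirror the proof of Theorem \ref{thm:comm+sim}, reducing to Proposition \ref{le:tri-lp}.

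\textbf{Step 1: $\sigma_{ABC}$ is CLC.} Assume $\rho_{ABC}$ and $\sigma_{ABC}$ are matched. Since $\rho_{ABC}$ commutes with each $\rho_{A_j}\ox\I_{A_{j^c}}$, the matched invariants $\tr[(\b_1\b_k)^2]$ and $\tr[\b_1^2\b_k^2]$ coincide for the indices $k$ of the single-party marginals. Lemma \ref{le:pcommute} then shows that $\sigma_{ABC}$ is also CLC.

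\textbf{Step 2: the spectra of the marginals agree.} The matched invariants $\tr[(\rho_{A_j}\ox\I)^k]=\tr[(\sigma_{A_j}\ox\I)^k]$ for all $k$ give equal power sums. By the Newton identities, each $\sigma_{A_j}$ has the same spectrum as $\rho_{A_j}$, so each $\sigma_{A_j}$ is non-degenerate as well.

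\textbf{Step 3: diagonalize the marginals.} Choose local unitaries $U_A,U_B,U_C$ diagonalizing $\rho_A,\rho_B,\rho_C$, and $V_A,V_B,V_C$ diagonalizing $\sigma_A,\sigma_B,\sigma_C$. Set $\tilde\rho=(U_A\ox U_B\ox U_C)\rho(U_A\ox U_B\ox U_C)^\dg$ and define $\tilde\sigma$ similarly with the $V$'s. The single-party marginals of $\tilde\rho$ are the conjugated marginals $U_j\rho_{A_j}U_j^\dg$, which are diagonal and non-degenerate; the same holds for $\tilde\sigma$. By Lemma \ref{le:com-inv}, applied to each party, both $\tilde\rho$ and $\tilde\sigma$ remain CLC. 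Because LU conjugation transforms every $\a_j$ and $\b_j$ by one common global unitary, $\tilde\rho$ is matched with $\rho$, and $\tilde\sigma$ is matched with $\sigma$. By transitivity, $\tilde\rho$ and $\tilde\sigma$ are matched.

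\textbf{Step 4: apply Proposition \ref{le:tri-lp}.} This yields a local permutation $P\ox Q\ox R$ with $\tilde\sigma=(P\ox Q\ox R)\tilde\rho(P\ox Q\ox R)^\dg$. Composing the local unitaries from Step 3 with this permutation shows that $\sigma_{ABC}$ is LU equivalent to $\rho_{ABC}$.

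The main point needing care is Step 3. One must confirm that the matched property is genuinely an LU invariant for all marginals $\rho_{A_\cS}$, including the two-party ones, and not merely for the single-party ones used in Proposition \ref{le:tri-lp}. This holds because the partial trace commutes with local unitary conjugation on the remaining parties. No further obstacle is expected: the argument is identical for $n$ parties, using Lemma \ref{le:tri-diag} in place of the tripartite diagonalization.
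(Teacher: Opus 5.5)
Your proposal is correct and follows the paper's proof essentially step for step: Lemma \ref{le:pcommute} transfers the CLC property to $\sigma_{ABC}$, local unitaries diagonalize the single-party marginals while Lemma \ref{le:com-inv} preserves complete local commutativity, and Proposition \ref{le:tri-lp} supplies the local permutation that, composed with the diagonalizing unitaries, yields the LU equivalence. You also make explicit two points the paper leaves implicit---that the marginals of $\sigma_{ABC}$ are non-degenerate by the Newton identities, and that being matched is preserved under LU conjugation (so $\tilde\rho_{ABC}$ and $\tilde\sigma_{ABC}$ remain matched)---which tightens the argument without changing it.
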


\begin{proof}
The ``Only if'' part is direct. We show the ``If'' part. Since $\sigma_{ABC}$ has the matched LUBIs with $\rho_{ABC}$, it follows by Lemma \ref{le:pcommute} that $\sigma_{ABC}$ is also a CLC state.
Up to the LU equivalence we may assume that the single-party marginals of $\tilde{\rho}_{ABC}$ and $\tilde{\sigma}_{ABC}$ all are diagonal and non-degenerate, where $\tilde{\rho}_{ABC}$ and $\tilde{\sigma}_{ABC}$ are LU equivalent to $\rho_{ABC}$ and $\sigma_{ABC}$ respectively for some tripartite LU operators. $\tilde{\rho}_{ABC}$ and $\tilde{\sigma}_{ABC}$ are also CLC states according to Lemma \ref{le:com-inv}. Therefore, based on Proposition \ref{le:tri-lp}, the matched LUBIs given by Eq. \eqref{eq:cj-1} ensure that $\tilde{\rho}_{ABC}$ and $\tilde{\sigma}_{ABC}$ are the same up to a local permutation. That is, there exists a tripartite local permutation $P$ such that $\tilde{\sigma}_{ABC}=P\tilde{\rho}_{ABC} P^\dg$. It follows that $\rho_{ABC}$ and $\sigma_{ABC}$ are LU equivalent.
\end{proof}

The study on the tripartite case can be directly extended to the $n$-partite case for any $n>3$. We present the general conclusion on the $n$-partite case as follows.

\begin{corollary}
\label{cor:lu-multi}
Suppose that the single-party marginals of the two states $\rho,\sigma$ of system $A_1\cdots A_n$ are all non-degenerate. If one of the two states is a CLC state, then the two $n$-partite states $\rho,\sigma$ are LU equivalent if and only if they are matched.
\end{corollary}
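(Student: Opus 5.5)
The plan is to reproduce the tripartite argument for arbitrary $n\ge 2$, with Lemma \ref{le:tri-diag} replacing Lemma \ref{le:diagonal}(i) and an $n$-fold Kronecker product of Vandermonde matrices replacing the two- and three-fold ones. The ``only if'' direction is immediate: each string of marginals transforms covariantly under a local unitary, so every LUBI is preserved. For the ``if'' direction, assume without loss of generality that $\rho$ is CLC and that $\rho,\sigma$ are matched.

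First, matching applied to the pairs $(\a_1,\a_{\{j\}})$, $(\b_1,\b_{\{j\}})$ gives
\[
\tr[(\b_1\b_{\{j\}})^2]=\tr[\b_1^2\b_{\{j\}}^2],
\]
where $\a_{\{j\}}=\rho_{A_j}\ox\I_{A_{j^c}}$ and similarly for $\b$. By Lemma \ref{le:pcommute}, $\sigma$ is then also CLC.

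Next, $\tr[(\rho_{A_j}\ox\I)^k]=d_{j^c}\tr[\rho_{A_j}^k]$, where $d_{j^c}$ is the dimension of $A_{j^c}$. Together with the Newton identities this shows that $\rho_{A_j}$ and $\sigma_{A_j}$ have the same non-degenerate spectrum. We choose local unitaries $U_j$ and $V_j$ diagonalizing $\rho_{A_j}$ and $\sigma_{A_j}$ with the eigenvalues in the same order, which sends both marginals to a common $D_j=\diag(a^{(j)}_1,\dots,a^{(j)}_{d_j})$ with distinct entries. Call the transformed states $\tilde\rho=(\bigox_j U_j)\rho(\bigox_j U_j)^\dg$ and $\tilde\sigma$, defined analogously with the $V_j$. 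Three facts survive this step. LUBIs are LU invariant, so $\tilde\rho$ and $\tilde\sigma$ are still matched. They are still CLC by Lemma \ref{le:com-inv}. And their marginals are the local conjugates of the original ones. Lemma \ref{le:tri-diag} then forces $\tilde\rho=\diag(\lambda_{\mathbf i})$ and $\tilde\sigma=\diag(\mu_{\mathbf i})$, indexed by multi-indices $\mathbf i=(i_1,\dots,i_n)$.

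Now use only the LUBIs
\[
\tr[\a_1\,\a_{\{1\}}^{p_1}\cdots\a_{\{n\}}^{p_n}]=\tr\Big[\tilde\rho\,\Big(\bigox_j D_j^{p_j}\Big)\Big]
\]
with $0\le p_j\le d_j-1$. These are legitimate strings of Definition \ref{def:match}, since the $\a_{\{j\}}$ act on distinct tensor factors and hence commute, so the order of factors does not matter. Equating them with the corresponding quantities for $\tilde\sigma$ gives the linear system $(\bigox_j M^{(j)})\hat X=0$, where $M^{(j)}$ is the Vandermonde matrix of $D_j$ and $\hat X=(\lambda_{\mathbf i}-\mu_{\mathbf i})_{\mathbf i}$. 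Each Vandermonde determinant is nonzero, so the Kronecker product is invertible, $\hat X=0$, and $\tilde\rho=\tilde\sigma$. Therefore
\[
\sigma=\Big(\bigox_j V_j^\dg U_j\Big)\rho\Big(\bigox_j V_j^\dg U_j\Big)^\dg,
\]
and $\rho,\sigma$ are LU equivalent. This uses at most $\prod_j d_j=D$ invariants beyond the spectral and commutativity checks.

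The step I expect to need the most care is the reduction to a common diagonal frame. It requires that the transformations used act locally, that they preserve both the CLC property and matching, and that the eigenvalue ordering can be aligned so that the two marginals coincide exactly. That last point is what makes the Vandermonde system apply entry-by-entry. Non-degeneracy of the marginals of $\rho$ is used twice: in Lemma \ref{le:tri-diag}, and for the invertibility of the Vandermonde matrices. Non-degeneracy for $\sigma$ is inherited through the spectral matching, so the hypothesis only needs to be imposed on one of the two states. If that alignment is done carefully, the rest is a direct generalization of Proposition \ref{le:tri-lp}. It can be organized as an induction on $n$, mirroring Lemma \ref{le:tri-diag}, or handled at once via the $n$-fold Kronecker product as above.
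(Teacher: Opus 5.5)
Your proof is correct and follows the paper's route, which the paper only sketches by analogy with Proposition \ref{le:tri-lp} and Theorem \ref{thm:tri-lu}: transfer the CLC property to $\sigma$ via Lemma \ref{le:pcommute}, diagonalize the single-party marginals by local unitaries, apply Lemma \ref{le:tri-diag}, and solve the $n$-fold Kronecker product of Vandermonde systems. The one difference is cosmetic: you choose the diagonalizing unitaries with a common eigenvalue ordering, which absorbs the paper's intermediate local-permutation (LP-equivalence) step directly into the LU transformation.
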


One may derive Corollary \ref{cor:lu-multi} in a similar way to the tripartite case, by establishing the LP equivalence for the $n$-partite states with diagonal single-party marginals, followed by the LU equivalence for the $n$-partite states with general single-party marginals.

\section{Counterexamples in the degenerate case}
\label{sec:cex}

In this section we show that all the matched LUBIs cannot ensure the LU equivalence between two general states. We consider the case where the states have degenerate single-party marginals, and propose specific pairs of states, where the states in each pair are matched but are not LU equivalent.

To test the LU equivalence of the states with degenerate single-party marginals, we begin with the extreme case where each single-party marginal is proportional to the identity. With this condition, the criteria of LU equivalence formulated by the matched LUBIs are reduced to the following equalities,
\beq
\label{eq:cj-2s}
\tr[\rho^k]=\tr[\sigma^k],~~\forall k>0,
\eeq 
where $\rho$ and $\sigma$ are two multipartite states that are matched. We show in this extreme case that there exist two states which are matched but are not LU equivalent by the following examples.

\begin{example}
\label{ex:c-ex}
Let us consider two states in spectral decompositions below,
\beq
\label{eq:cex-2}
\bal
\rho_{AB}&=p\ketbra{a}+q\ketbra{b}+r\ketbra{c}, \\ 
\sigma_{AB}&=p\ketbra{a}+q\ketbra{d}+r\ketbra{f}
\eal
\eeq
for distinct $p,q,r >0$ with $p+q+r=1$,
where $\omega=e^{\frac{2\pi}{3}i}$,
\beq
\label{eq:cex-1}
\bal
\ket{a}&=\frac{1}{\sqrt{3}}(\ket{00}+\ket{11}+\ket{22}), \\
\ket{b}&=\frac{1}{\sqrt{3}}(\ket{00}+\omega\ket{11}+\omega^2\ket{22}), \\
\ket{c}&=\frac{1}{\sqrt{3}}(\ket{00}+\omega^2\ket{11}+\omega\ket{22}), \\
\ket{d}&=\frac{1}{\sqrt{3}}(\ket{01}+\ket{12}+\ket{20}), \\
\ket{f}&=\frac{1}{\sqrt{3}}(\ket{01}+\omega\ket{12}+\omega^2\ket{20}). \\
\eal
\eeq
 We claim that $\rho_{AB}$ is not LU equivalent to $\sigma_{AB}$, while the two states $\rho_{AB}$ and $\sigma_{AB}$ are matched. 
\end{example}

\begin{proof}
First, one can verify that $\rho_A=\rho_B=\sigma_A=\sigma_B=\frac{1}{3}I_3$. It follows that for each pair $(j,k)$, $\alpha_j$ and $\alpha_k$ are commutative, and $\b_j$ and $\beta_k$ are commutative. Hence, it is equivalent to considering Eq. \eqref{eq:cj-2s} for any positive integer $k$. Second, the five pure states given by Eq. \eqref{eq:cex-1} are all maximally entangled, and are mutually orthogonal. Then the spectral decompositions of $\rho_{AB}$ and $\sigma_{AB}$ read as Eq. \eqref{eq:cex-2}. Thus, $\tr[\rho^k_{AB}]=\tr[(\sigma_{AB})^k]$ holds for any $k>0$, as $\rho_{AB}$ and $\sigma_{AB}$ share identical eigenvalues. However, we claim that $\rho_{AB}$ and $\sigma_{AB}$ given by Eq. \eqref{eq:cex-2} are not LU equivalent.

We show the claim above by contradiction. Assume that $\rho_{AB}$ and $\sigma_{AB}$ are LU equivalent. By Lemma \ref{le:lu-equiv-g}, it means that there exists a common local unitary $U\ox V$ such that the following equalities hold,
\begin{eqnarray}
\label{eq:cex-3.1}
&&(U\ox V)\ket{a}=e^{i\a}\ket{a}, \\
\label{eq:cex-3.2}
&&(U\ox V)\ket{b}=e^{i\b}\ket{d}, \\
\label{eq:cex-3.3}
&&(U\ox V)\ket{c}=e^{i\gamma}\ket{f}.
\end{eqnarray}
By calculation we obtain that 
\beq
\label{eq:cex-4}
\left\{
\bal
V&=e^{i\theta} U^*, \\
V
\bma
1 & 0 & 0 \\
0 & \omega & 0 \\
0 & 0 & \omega^2 
\ema
V^\dg &\propto
\bma
0 & 1 & 0 \\
0 & 0 & 1 \\
1 & 0 & 0
\ema, \\
V
\bma
1 & 0 & 0 \\
0 & \omega^2 & 0 \\
0 & 0 & \omega 
\ema
V^\dg &\propto
\bma
0 & 1 & 0 \\
0 & 0 & \omega \\
\omega^2 & 0 & 0
\ema,
\eal
\right.
\eeq
where $U^*$ is the complex conjugate of $U$.
Due to $V
\bma
1 & 0 & 0 \\
0 & \omega & 0 \\
0 & 0 & \omega^2 
\ema
V^\dg V
\bma
1 & 0 & 0 \\
0 & \omega^2 & 0 \\
0 & 0 & \omega 
\ema
V^\dg=I_3$, from Eq. \eqref{eq:cex-4} we obtain a contradiction as $\bma
0 & 1 & 0 \\
0 & 0 & 1 \\
1 & 0 & 0
\ema\cdot
\bma
0 & 1 & 0 \\
0 & 0 & \omega \\
\omega^2 & 0 & 0
\ema$ cannot be proportional to $I_3$. It implies that such unitary $V$ does not exist. Therefore, $\rho_{AB}$ and $\sigma_{AB}$ are not LU equivalent.
\end{proof} 

One can alternatively verify that the two states $\rho_{AB}$ and $\sigma_{AB}$ proposed in Example \ref{ex:c-ex} are not LU equivalent, in terms of the criterion given by \cite[Eq. (7)]{bilu2012}. We show the alternative verification in Appendix \ref{sec:verify2}.
Example \ref{ex:c-ex} can be generalized to high-dimensional bipartite systems with equal local dimensions. We propose two states supported on $\bbC^d\ox\bbC^d$ ($d\geq 3$) in the following example which are matched but are not LU equivalent.

\begin{example}
\label{ex:cex-2}
Define the following states in $\bbC^d\ox\bbC^d$,
\beq
\label{eq:cex-1.1}
\bal
\ket{a_0}&=\frac{1}{\sqrt{d}}\sum_{j=0}^{d-1}\ket{j,j}, \\
\ket{a_1}&=\frac{1}{\sqrt{d}}\sum_{j=0}^{d-1}\omega^j \ket{j,j}, \\
\ket{a_2}&=\frac{1}{\sqrt{d}}\big(\sum_{j=1}^{d-1}\omega^{\tau_1(j)} \ket{j,j}\big), \\
&\cdots \\
\ket{a_{d-1}}&=\frac{1}{\sqrt{d}}\sum_{j=0}^{d-1}\omega^{\tau_{d-2}(j)} \ket{j,j}, \\
\ket{b_1}&=\frac{1}{\sqrt{d}}\sum_{j=0}^{d-1}\ket{j,j+1\mod d}, \\
\ket{b_2}&=\frac{1}{\sqrt{d}}\sum_{j=0}^{d-1}\omega^j\ket{j,j+1\mod d}, \\
\ket{b_3}&=\frac{1}{\sqrt{d}}\sum_{j=0}^{d-1}\omega^{\tau_1(j)}\ket{j,j+1\mod d}, \\
&\cdots \\
\ket{b_{d-1}}&=\frac{1}{\sqrt{d}}\sum_{j=0}^{d-1}\omega^{\tau_{d-3}(j)}\ket{j,j+1\mod d}, \\
\eal
\eeq
where $\omega=e^{\frac{2\pi}{d}i}$, and for each $1\leq k \leq d-2$, $\tau_k$ is a permutation of $\{1,\cdots,d-1\}$ defined by
\beq
\label{eq:cex-1.2}
\tau_k=\underbrace{(1,2,3,\cdots,d-1)\cdots(1,2,3,\cdots,d-1)}_{k}
\eeq
which means $k$ times of the cycle of $\{1,\cdots,d-1\}$. 
Consider the following states,
\beq
\label{eq:cex-1.3}
\bal
\rho_{AB}&=\sum_{j=0}^{d-1} p_j\ketbra{a_j}, \\
\sigma_{AB}&=p_0\ketbra{a_0}+\sum_{j=1}^{d-1} p_j\ketbra{b_j},
\eal
\eeq
where $p_j$'s are all distinct and $\sum_{j=0}^{d-1}=1$.
We claim that $\rho_{AB}$ is not LU equivalent to $\sigma_{AB}$, while the two states $\rho_{AB}$ and $\sigma_{AB}$ are matched. 
\end{example}

\begin{proof}
First, it is verified that the vectors in $\{\ket{a_0},\ket{a_1},\cdots,\ket{a_{d-1}},\ket{b_1},\cdots,\ket{b_{d-1}}\}$ are pairwisely orthonormal. Thus, the decompositions of $\rho_{AB}$ and $\sigma_{AB}$ given by Eq. \eqref{eq:cex-1.3} are spectral decompositions. Second, one can show that $\rho_A=\rho_B=\sigma_A=\sigma_B=\frac{1}{d}\I_d$. It follows that for each pair $(j,k)$, $\alpha_j$ and $\alpha_k$ are commutative, and $\b_j$ and $\beta_k$ are commutative. Hence, it is equivalent to considering Eq. \eqref{eq:cj-2s} for any positive integer $k$. Since $\rho_{AB}$ and $\sigma_{AB}$ share identical eigenvalues, Eq. \eqref{eq:cj-2s}, namely, $\tr[\rho^k_{AB}]=\tr[(\sigma_{AB})^k]$, always holds for any positive integer $k$. However, we claim that $\rho_{AB}$ and $\sigma_{AB}$ given by Eq. \eqref{eq:cex-2} are not LU equivalent.

We show the above claim by contradiction. Assume that $\rho_{AB}$ and $\sigma_{AB}$ are LU equivalent. By Lemma \ref{le:lu-equiv-g}, it means there exists a common LU operator $U\ox V$ such that the following equalities hold,
\begin{eqnarray}
\label{eq:cex-1.4-1}
&&(U\ox V) \ket{a_0}=e^{i\alpha_0}\ket{a_0}, \\
\label{eq:cex-1.4-2}
&&(U\ox V) \ket{a_j}=e^{i\alpha_j}\ket{b_j},~~\forall 1\leq j\leq d-1.
\end{eqnarray}
By calculation we obtain that $V=e^{i\theta} U^*$ from Eq. \eqref{eq:cex-1.4-1}, and obtain from Eq. \eqref{eq:cex-1.4-2} that
\begin{widetext}
\beq
\label{eq:cex-1.6}
\left\{
\bal
V
\left(\sum_{j=0}^{d-1}\omega^{j}\ketbra{j}\right)
V^\dg &\propto
\sum_{j=0}^{d-2}\ketbra{j}{j+1}+\ketbra{d-1}{0}, \\
V
\left(\ketbra{0}+\sum_{j=1}^{d-1}\omega^{\tau_1(j)}\ketbra{j}\right)
V^\dg &\propto
\ketbra{0}{1}+\left(\sum_{j=1}^{d-2}\omega^j\ketbra{j}{j+1}\right)+\omega^{d-1}\ketbra{d-1}{0}, \\
V
\left(\ketbra{0}+\sum_{j=1}^{d-1}\omega^{\tau_2(j)}\ketbra{j}\right)
V^\dg &\propto
\ketbra{0}{1}+\left(\sum_{j=1}^{d-2}\omega^{\tau_1(j)}\ketbra{j}{j+1}\right)+\omega^{\tau_1(d-1)}\ketbra{d-1}{0}, \\
&\ldots, \\
V
\left(\ketbra{0}+\sum_{j=1}^{d-1}\omega^{\tau_{d-2}(j)}\ketbra{j}\right)
V^\dg &\propto
\ketbra{0}{1}+\left(\sum_{j=1}^{d-2}\omega^{\tau_{d-3}(j)}\ketbra{j}{j+1}\right)+\omega^{\tau_{d-3}(d-1)}\ketbra{d-1}{0}, \\
\eal
\right.
\eeq
\end{widetext}
where $U^*$ is the complex conjugate of $U$. One can verify that the multiplication of each left hand side of Eq. \eqref{eq:cex-1.6} is equal to $\I_d$. However, by direct calculation the multiplication of each right hand side of Eq. \eqref{eq:cex-1.6} is not proportional to $\I_d$. It implies that such unitary $V$ does not exist. Therefore, $\rho_{AB}$ and $\sigma_{AB}$ are not LU equivalent.
\end{proof}

Next, we show that all the matched LUBIs are not sufficient to enusre the LU equivalence between multipartite states by the following example.
Similar to the idea of dealing with the case of bipartite states, we further construct two three-qubit states in Example \ref{ex:cex-tri} whose marginals are all proportional to identity matrices in any bipartition of the three-qubit system.

\begin{example}
\label{ex:cex-tri}
We construct two three-qubit states as follows,
\begin{widetext}
\beq
\label{eq:cex-tri-1}
\bal
\rho&=\frac{1}{4}(\ketbra{000}+\ketbra{011}+\ketbra{101}+\ketbra{110}), \\
\sigma(\theta)&=\frac{1}{8}\Big[(\ket{000}+\ket{111})(\bra{000}+\bra{111})+(\ket{001}+\ket{110})(\bra{001}+\bra{110}) \\
&~~~+(\ket{010}+\ket{101})(\bra{010}+\bra{101})+(e^{i\theta}\ket{011}+\ket{100})(e^{-i\theta}\bra{011}+\bra{100})\Big],
\eal
\eeq
\end{widetext}
where $\theta\in(0,2\pi)$. For any $\theta\in(0,2\pi)$, the two states $\rho$ and $\sigma(\theta)$ are matched but are not LU equivalent.
\end{example}

\begin{proof}
First, one can verify that for either $\rho$ or $\sigma(\theta)$, and each bipartition of the three-qubit system, the two marginals always are $\frac{1}{2}I_2$ and $\frac{1}{4}I_4$, respectively. It follows that for any pair $(j,k)$, $\alpha_j$ and $\alpha_k$ given by Eq. \eqref{eq:cj-2m.1} are commutative, and similarly $\alpha'_j$ and $\alpha'_k$ are also commutative. Then for the pair of states proposed in Eq. \eqref{eq:cex-tri-1}, according to the equalities of matched LUBIs formulated by Eq. \eqref{eq:cj-2m} it is equivalent to verifying that 
\beq
\label{eq:ctri-1}
\tr[\rho^k]=\tr[(\sigma(\theta))^k]
\eeq
for any integer $k>0$. Since $\rho$ and $\sigma(\theta)$ have identical eigenvalues, we conclude that Eq. \eqref{eq:ctri-1} holds for any integer $k>0$. Thus, $\rho$ and $\sigma(\theta)$ given by Eq. \eqref{eq:cex-tri-1} share all the matched LUBIs.

Second we show that $\rho$ and $\sigma(\theta)$ for $\theta\in(0,2\pi)$ cannot be LU equivalent. Consider the two ranges of $\rho$ and $\sigma(\theta)$, denoted by $\cR(\rho)$ and $\cR(\sigma(\theta))$. Obviously, the $\cR(\rho)$ is spanned by the four product vectors $\ket{000},\ket{011},\ket{101},\ket{101}$. Here we claim that $\cR(\sigma(\theta))$ contains no product vector when $\theta\in(0,2\pi)$, and thus $\rho$ and $\sigma(\theta)$ are not LU equivalent. We show this claim by contradiction. Assume that there is product vector in $\cR(\sigma(\theta))$. It follows that there exist coefficients $c_1,\cdots,c_4$ that are not all zero, such that
\beq
\label{eq:ctri-2}
\bal
v&=c_1(\ket{000}+\ket{111})+c_2(\ket{001}+\ket{110}) \\
&+c_3(\ket{010}+\ket{101})+c_4(e^{i\theta}\ket{011}+\ket{100})
\eal
\eeq
is a product vector. We may rewrite $v$ as
\beq
\label{eq:ctri-2.1}
\bal
v&=\ket{0}(c_1\ket{00}+c_2\ket{01}+c_3\ket{10}+c_4e^{i\theta}\ket{11}) \\ 
&+\ket{1}(c_4\ket{00}+c_3\ket{01}+c_2\ket{10}+c_1\ket{11}).
\eal
\eeq
When $v$ is a product vector, the two vectors $c_1\ket{00}+c_2\ket{01}+c_3\ket{10}+c_4e^{i\theta}\ket{11}$ and $c_4\ket{00}+c_3\ket{01}+c_2\ket{10}+c_1\ket{11}$ have to be proportional. It means that 
\beq
\label{eq:ctri-2.2}
\frac{c_1}{c_4}=\frac{c_4e^{i\theta}}{c_1}=\frac{c_2}{c_3}=\frac{c_3}{c_2}=t.
\eeq
We can rule out the case when $c_1=c_4=0$. Otherwise, $c_2\ket{01}+c_3\ket{10}$ cannot be a bipartite product vector for $c_2c_3\neq 0$. Similarly, we can also rule out the case when $c_2=c_3=0$. Thus, we obtain $c_1c_2c_3c_4\neq 0$. One can verify that the proportionality coefficient $t$ in Eq. \eqref{eq:ctri-2.2} can only be $\pm 1$. By Eq. \eqref{eq:ctri-2.2} we obtain that $e^{i\theta}=1$. However, $e^{i\theta}\neq 1$ for $\theta\in(0,2\pi)$. Hence, we obtain a contradiction. In other words, for $\theta\in(0,2\pi)$ the vector $v$ given by Eq. \eqref{eq:ctri-2} cannot be a product one. Thus, the range of $\sigma(\theta)$ contains no product vector. In light of this, $\rho$ and $\sigma(\theta)$ for $\theta\in(0,2\pi)$ proposed by Example \ref{ex:cex-tri} are not LU equivalent.
\end{proof}

\section{conclusion}
\label{sec:con}

We have investigated a criterion of LU equivalence for bipartite and multipartite states by the LUBIs. All the associated LUBIs from two respective states formulated in Definition \ref{def:match} are necessarily matched, if they are LU equivalent. Then we have focused on the converse, i.e., whether the matching of all LUBIs is sufficient to ensure the LU equivalence. We tested the special case involving the so-called local commutativity, i.e., the multipartite state $\rho_{A_1\cdots A_n}$ commutes with $\rho_{A_{\cS}}\ox\I_{A_{\cS^c}}$ for some subsystem $A_{\cS}$. If the state commutes with every $\rho_{A_{\cS}}\ox\I_{A_{\cS^c}}$ with $\abs{\cS}=1$, the state is a CLC one by Def. \ref{def:commute}. Take the bipartite case as an example. The criterion in terms of all matched LUBIs can be reduced to the standard form consisting of the equalities given by Eq. \eqref{eq:bi-sta} for the bipartite CLC states. With such a simplified criterion, we have shown by Proposition \ref{thm:distinct} that for a CLC state $\rho_{AB}$ with two non-degenerate diagonal marginals, (i) the state that is LU equivalent to $\rho_{AB}$ is indeed LP equivalent to $\rho_{AB}$, and (ii) the state $\sigma_{AB}$ is LU equivalent to $\rho_{AB}$ if and only if they are matched. It follows by Lemma \ref{le:com-inv} that the property of local commutativity remains unchanged after applying LU operations. Based on this fact we extended Proposition \ref{thm:distinct} to the states whose marginals may not be diagonal. We have shown by Theorem \ref{thm:comm+sim} that for a CLC state $\rho_{AB}$ with two non-degenerate marginals, another state $\sigma_{AB}$ is LU equivalent to $\rho_{AB}$ if and only if they are matched. Then we generalized the above two results to the case of multipartite CLC states by Proposition \ref{le:tri-lp} and Theorem \ref{thm:tri-lu}. Specifically, for two multipartite CLC states with non-degenerate single-party marginals, all matched LUBIs are sufficient to ensure their LU equivalence. It is worth mentioning that at most $D$ (the global dimension) LUBIs are needed to verify the LU equivalence between such states. It is indicated that the condition of non-degenerate marginals is crucial. We have shown by examples that the criterion in terms of all matched LUBIs is not sufficient to ensure the LU equivalence if the marginals of states are degenerate. We have constructed two-qutrit states in Example \ref{ex:c-ex}, two-qudit states in Example \ref{ex:cex-2}, and three-qubit states in Examle \ref{ex:cex-tri}, which are matched respectively but not LU equivalent.

There are some interesting problems remaining for future work. Since matched LUBIs are generally not sufficient to ensure the LU equivalence, it is naturally to ask the range of applicability of this criterion, i.e., characterizing the set of states where the LU equivalence can be determined by this criterion. From counterexamples proposed in Sec. \ref{sec:cex}, it would be interesting to find additional LU invariants to make up a complete set of LU invariants.

\acknowledgments

Y.S. is funded by the NNSF of China (Grant No. 12401597) and the Basic Research Program of Jiangsu (Grant No. BK20241603). L.C. is supported by the NNSF of China (Grant No. 12471427).

\appendix

\section{Verification of Example \ref{ex:c-ex} based on the known criterion}
\label{sec:verify2}

Here we show that the two states $\rho_{AB}$ and $\sigma_{AB}$ proposed in Example \ref{ex:c-ex} are not LU equivalent by the known criterion given by \cite[Eq. (7)]{bilu2012}. It suffices to check the invariants defined by the second line in \cite[Eq. (7)]{bilu2012}.

According to the two states in \eqref{eq:cex-2}, we obtain the following matrices:
\beq
\label{eq:ve-1}
\bal
A_1&=\frac{1}{\sqrt{3}}\diag(1,1,1), \quad A'_1=\frac{1}{\sqrt{3}}\diag(1,1,1), \\
A_2&=\frac{1}{\sqrt{3}}\diag(1,\omega,\omega^2), \quad 
A'_2=\frac{1}{\sqrt{3}}
\bma
0 & 1 & 0 \\
0 & 0 & 1 \\
1 & 0 & 0
\ema, \\
A_3&=\frac{1}{\sqrt{3}}\diag(1,\omega^2,\omega), \quad A'_3=\frac{1}{\sqrt{3}}
\bma
0 & 1 & 0 \\
0 & 0 & \omega \\
\omega^2 & 0 & 0
\ema.
\eal
\eeq
For each $1\leq j\leq 3$ we have $A_jA_j^\dg=\frac{1}{3}I_3$ and $A'_j(A'_j)^\dg=\frac{1}{3}I_3$. For each pair $(j,k)$ with $j\neq k$, we have
\beq
\label{eq:ve-2}
\bal
A_1A_2^\dg&=\frac{1}{\sqrt{3}}A_2^\dg,~~A_2A_1^\dg=\frac{1}{\sqrt{3}}A_2,  \\
A'_1(A'_2)^\dg&=\frac{1}{\sqrt{3}}(A'_2)^\dg,~~A'_2(A'_1)^\dg=\frac{1}{\sqrt{3}}A'_2,  \\
A_1A_3^\dg&=\frac{1}{\sqrt{3}}A_3^\dg, ~~A_3A_1^\dg=\frac{1}{\sqrt{3}}A_3, \\
A'_1(A'_3)^\dg&=\frac{1}{\sqrt{3}}(A'_3)^\dg,~~A'_3(A'_1)^\dg=\frac{1}{\sqrt{3}}A'_3,  \\
A_2A_3^\dg&=\frac{1}{\sqrt{3}}A_3, ~~A_3A_2^\dg=\frac{1}{\sqrt{3}}A_2, \\
A'_2(A'_3)^\dg&=\frac{1}{\sqrt{3}}A_3,~~A'_3(A'_2)^\dg=\frac{1}{\sqrt{3}}A_2.  \\
\eal
\eeq
By the above equation we find
\beq
\label{eq:ve-3}
\bal
&\tr[(A_1A_2^\dg)(A_3A_2^\dg)]=\frac{1}{3}\tr[A_2^\dg A_2]=\frac{1}{3}\tr[\frac{1}{3}I_3]=\frac{1}{3}, \\
&\tr\Big[\big(A'_1(A'_2)^\dg\big)\big(A'_3(A'_2)^\dg\big)\Big]=\frac{1}{3}\tr[(A'_2)^\dg A_2]=0.
\eal
\eeq
The two traces calculated by Eq. \eqref{eq:ve-3} contradcits one of the invariants defined by the second line in \cite[Eq. (7)]{bilu2012}. Thus, according to the criterion given by \cite{bilu2012} we also determine that the two states in \eqref{eq:cex-2} are not LU equivalent.


\bibliography{witness}

\end{document}

%% file: notation.tex
\usepackage[utf8]{inputenc} 
\usepackage{amsmath,amsfonts,amssymb,mathrsfs,amsthm,cases,tabularx} 
\usepackage{bm, bbm, dsfont} 
\usepackage{physics} 
\usepackage{graphicx} 
\usepackage{extarrows}
\usepackage{diagbox,threeparttable,dashrule,booktabs,dcolumn} 
\usepackage{xcolor} 
\usepackage{url} 
\usepackage{hyperref} 
\hypersetup{colorlinks,linkcolor={blue},citecolor={blue},urlcolor={red}}
\usepackage{cleveref} 

\crefname{equation}{Eq.}{Eqs.}
\crefname{section}{Sec.}{Secs.}
\crefname{definition}{Definition}{Definitions}
\crefname{proposition}{Proposition}{Propositions}
\crefname{lemma}{Lemma}{Lemmas}
\crefname{theorem}{Theorem}{Theorems}
\crefname{corollary}{Corollary}{Corollaries}
\crefname{conjecture}{Conjecture}{Conjectures}
\crefname{claim}{}{Claims}
\crefname{example}{Example}{Examples}

\newtheorem{definition}{Definition}
\newtheorem{proposition}[definition]{Proposition}
\newtheorem{lemma}[definition]{Lemma}

\newtheorem{theorem}[definition]{Theorem}
\newtheorem{corollary}[definition]{Corollary}

\newtheorem{example}[definition]{Example}

\newcommand{\nc}{\newcommand}

\def\red{\textcolor{red}}

\def\bpf{\begin{proof}}
\def\epf{\end{proof}}
\def\bea{\begin{eqnarray}}
\def\eea{\end{eqnarray}}
\def\beq{\begin{equation}}
\def\eeq{\end{equation}}
\def\bal{\begin{aligned}}
\def\eal{\end{aligned}}
\def\bma{\begin{pmatrix}}
\def\ema{\end{pmatrix}}

\def\bigox{\bigotimes}
\def\dg{\dagger}

\def\ox{\otimes}

\def\I{\mathds{I}}
\def\diag{\mathop{{\rm diag}}}

\def\a{\alpha}
\def\b{\beta}

\nc{\bbA}{{\mathbb A}}  \nc{\bbB}{{\mathbb B}}  \nc{\bbC}{{\mathbb C}}
\nc{\bbD}{{\mathbb D}}  \nc{\bbE}{{\mathbb E}}  \nc{\bbF}{{\mathbb F}}
\nc{\bbG}{{\mathbb G}}  \nc{\bbH}{{\mathbb H}}  \nc{\bbI}{{\mathbb I}}
\nc{\bbJ}{{\mathbb J}}  \nc{\bbK}{{\mathbb K}}  \nc{\bbL}{{\mathbb L}}
\nc{\bbM}{{\mathbb M}}  \nc{\bbN}{{\mathbb N}}  \nc{\bbO}{{\mathbb O}}
\nc{\bbP}{{\mathbb P}}  \nc{\bbQ}{{\mathbb Q}}  \nc{\bbR}{{\mathbb R}}
\nc{\bbS}{{\mathbb S}}  \nc{\bbT}{{\mathbb T}}  \nc{\bbU}{{\mathbb U}}
\nc{\bbV}{{\mathbb V}}  \nc{\bbW}{{\mathbb W}}  \nc{\bbX}{{\mathbb X}}
\nc{\bbY}{{\mathbb Y}}  \nc{\bbZ}{{\mathbb Z}}  

\nc{\bA}{{\bf A}}  \nc{\bB}{{\bf B}}  \nc{\bC}{{\bf C}}
\nc{\bD}{{\bf D}}  \nc{\bE}{{\bf E}}  \nc{\bF}{{\bf F}}
\nc{\bG}{{\bf G}}  \nc{\bH}{{\bf H}}  \nc{\bI}{{\bf I}}
\nc{\bJ}{{\bf J}}  \nc{\bK}{{\bf K}}  \nc{\bL}{{\bf L}}
\nc{\bM}{{\bf M}}  \nc{\bN}{{\bf N}}  \nc{\bO}{{\bf O}}
\nc{\bP}{{\bf P}}  \nc{\bQ}{{\bf Q}}  \nc{\bR}{{\bf R}}
\nc{\bS}{{\bf S}}  \nc{\bT}{{\bf T}}  \nc{\bU}{{\bf U}}
\nc{\bV}{{\bf V}}  \nc{\bW}{{\bf W}}  \nc{\bX}{{\bf X}}
\nc{\bY}{{\bf Y}}  \nc{\bZ}{{\bf Z}}  

\nc{\cA}{{\cal A}}  \nc{\cB}{{\cal B}}  \nc{\cC}{{\cal C}}
\nc{\cD}{{\cal D}}  \nc{\cE}{{\cal E}}  \nc{\cF}{{\cal F}}
\nc{\cG}{{\cal G}}  \nc{\cH}{{\cal H}}  \nc{\cI}{{\cal I}}
\nc{\cJ}{{\cal J}}  \nc{\cK}{{\cal K}}  \nc{\cL}{{\cal L}}
\nc{\cM}{{\cal M}}  \nc{\cN}{{\cal N}}  \nc{\cO}{{\cal O}}
\nc{\cP}{{\cal P}}  \nc{\cQ}{{\cal Q}}  \nc{\cR}{{\cal R}}
\nc{\cS}{{\cal S}}  \nc{\cT}{{\cal T}}  \nc{\cU}{{\cal U}}
\nc{\cV}{{\cal V}}  \nc{\cW}{{\cal W}}  \nc{\cX}{{\cal X}}
\nc{\cY}{{\cal Y}}  \nc{\cZ}{{\cal Z}}  

\def\ox{\otimes}
\def\bigox{\bigotimes}
\def\dg{\dagger}




%% file: LU_criterion_26v5.bbl
\begin{thebibliography}{36}%
\makeatletter
\providecommand \@ifxundefined [1]{%
 \@ifx{#1\undefined}
}%
\providecommand \@ifnum [1]{%
 \ifnum #1\expandafter \@firstoftwo
 \else \expandafter \@secondoftwo
 \fi
}%
\providecommand \@ifx [1]{%
 \ifx #1\expandafter \@firstoftwo
 \else \expandafter \@secondoftwo
 \fi
}%
\providecommand \natexlab [1]{#1}%
\providecommand \enquote  [1]{``#1''}%
\providecommand \bibnamefont  [1]{#1}%
\providecommand \bibfnamefont [1]{#1}%
\providecommand \citenamefont [1]{#1}%
\providecommand \href@noop [0]{\@secondoftwo}%
\providecommand \href [0]{\begingroup \@sanitize@url \@href}%
\providecommand \@href[1]{\@@startlink{#1}\@@href}%
\providecommand \@@href[1]{\endgroup#1\@@endlink}%
\providecommand \@sanitize@url [0]{\catcode `\\12\catcode `\$12\catcode `\&12\catcode `\#12\catcode `\^12\catcode `\_12\catcode `\%12\relax}%
\providecommand \@@startlink[1]{}%
\providecommand \@@endlink[0]{}%
\providecommand \url  [0]{\begingroup\@sanitize@url \@url }%
\providecommand \@url [1]{\endgroup\@href {#1}{\urlprefix }}%
\providecommand \urlprefix  [0]{URL }%
\providecommand \Eprint [0]{\href }%
\providecommand \doibase [0]{https://doi.org/}%
\providecommand \selectlanguage [0]{\@gobble}%
\providecommand \bibinfo  [0]{\@secondoftwo}%
\providecommand \bibfield  [0]{\@secondoftwo}%
\providecommand \translation [1]{[#1]}%
\providecommand \BibitemOpen [0]{}%
\providecommand \bibitemStop [0]{}%
\providecommand \bibitemNoStop [0]{.\EOS\space}%
\providecommand \EOS [0]{\spacefactor3000\relax}%
\providecommand \BibitemShut  [1]{\csname bibitem#1\endcsname}%
\let\auto@bib@innerbib\@empty
\bibitem [{\citenamefont {Hu}\ \emph {et~al.}(2023)\citenamefont {Hu}, \citenamefont {Guo}, \citenamefont {Liu}, \citenamefont {Li},\ and\ \citenamefont {Guo}}]{telepor-naturereview2023}%
  \BibitemOpen
  \bibfield  {author} {\bibinfo {author} {\bibfnamefont {X.-M.}\ \bibnamefont {Hu}}, \bibinfo {author} {\bibfnamefont {Y.}~\bibnamefont {Guo}}, \bibinfo {author} {\bibfnamefont {B.-H.}\ \bibnamefont {Liu}}, \bibinfo {author} {\bibfnamefont {C.-F.}\ \bibnamefont {Li}},\ and\ \bibinfo {author} {\bibfnamefont {G.-C.}\ \bibnamefont {Guo}},\ }\bibfield  {title} {\bibinfo {title} {Progress in quantum teleportation},\ }\href {https://doi.org/10.1038/s42254-023-00588-x} {\bibfield  {journal} {\bibinfo  {journal} {Nature Reviews Physics}\ }\textbf {\bibinfo {volume} {5}},\ \bibinfo {pages} {339} (\bibinfo {year} {2023})}\BibitemShut {NoStop}%
\bibitem [{\citenamefont {Hayashi}\ and\ \citenamefont {Wang}(2022)}]{densecoding2022}%
  \BibitemOpen
  \bibfield  {author} {\bibinfo {author} {\bibfnamefont {M.}~\bibnamefont {Hayashi}}\ and\ \bibinfo {author} {\bibfnamefont {K.}~\bibnamefont {Wang}},\ }\bibfield  {title} {\bibinfo {title} {Dense coding with locality restriction on decoders: Quantum encoders versus superquantum encoders},\ }\href {https://doi.org/10.1103/PRXQuantum.3.030346} {\bibfield  {journal} {\bibinfo  {journal} {PRX Quantum}\ }\textbf {\bibinfo {volume} {3}},\ \bibinfo {pages} {030346} (\bibinfo {year} {2022})}\BibitemShut {NoStop}%
\bibitem [{\citenamefont {Yin}\ \emph {et~al.}(2020)\citenamefont {Yin}, \citenamefont {Li}, \citenamefont {Liao}, \citenamefont {Yang}, \citenamefont {Cao}, \citenamefont {Zhang}, \citenamefont {Ren}, \citenamefont {Cai}, \citenamefont {Liu}, \citenamefont {Li}, \citenamefont {Shu}, \citenamefont {Huang}, \citenamefont {Deng}, \citenamefont {Li}, \citenamefont {Zhang}, \citenamefont {Liu}, \citenamefont {Chen}, \citenamefont {Lu}, \citenamefont {Wang}, \citenamefont {Xu}, \citenamefont {Wang}, \citenamefont {Peng}, \citenamefont {Ekert},\ and\ \citenamefont {Pan}}]{qcrypto2020}%
  \BibitemOpen
  \bibfield  {author} {\bibinfo {author} {\bibfnamefont {J.}~\bibnamefont {Yin}}, \bibinfo {author} {\bibfnamefont {Y.-H.}\ \bibnamefont {Li}}, \bibinfo {author} {\bibfnamefont {S.-K.}\ \bibnamefont {Liao}}, \bibinfo {author} {\bibfnamefont {M.}~\bibnamefont {Yang}}, \bibinfo {author} {\bibfnamefont {Y.}~\bibnamefont {Cao}}, \bibinfo {author} {\bibfnamefont {L.}~\bibnamefont {Zhang}}, \bibinfo {author} {\bibfnamefont {J.-G.}\ \bibnamefont {Ren}}, \bibinfo {author} {\bibfnamefont {W.-Q.}\ \bibnamefont {Cai}}, \bibinfo {author} {\bibfnamefont {W.-Y.}\ \bibnamefont {Liu}}, \bibinfo {author} {\bibfnamefont {S.-L.}\ \bibnamefont {Li}}, \bibinfo {author} {\bibfnamefont {R.}~\bibnamefont {Shu}}, \bibinfo {author} {\bibfnamefont {Y.-M.}\ \bibnamefont {Huang}}, \bibinfo {author} {\bibfnamefont {L.}~\bibnamefont {Deng}}, \bibinfo {author} {\bibfnamefont {L.}~\bibnamefont {Li}}, \bibinfo {author} {\bibfnamefont {Q.}~\bibnamefont {Zhang}}, \bibinfo {author} {\bibfnamefont {N.-L.}\ \bibnamefont {Liu}}, \bibinfo {author}
  {\bibfnamefont {Y.-A.}\ \bibnamefont {Chen}}, \bibinfo {author} {\bibfnamefont {C.-Y.}\ \bibnamefont {Lu}}, \bibinfo {author} {\bibfnamefont {X.-B.}\ \bibnamefont {Wang}}, \bibinfo {author} {\bibfnamefont {F.}~\bibnamefont {Xu}}, \bibinfo {author} {\bibfnamefont {J.-Y.}\ \bibnamefont {Wang}}, \bibinfo {author} {\bibfnamefont {C.-Z.}\ \bibnamefont {Peng}}, \bibinfo {author} {\bibfnamefont {A.~K.}\ \bibnamefont {Ekert}},\ and\ \bibinfo {author} {\bibfnamefont {J.-W.}\ \bibnamefont {Pan}},\ }\bibfield  {title} {\bibinfo {title} {Entanglement-based secure quantum cryptography over 1,120 kilometres},\ }\href {https://doi.org/10.1038/s41586-020-2401-y} {\bibfield  {journal} {\bibinfo  {journal} {Nature}\ }\textbf {\bibinfo {volume} {582}},\ \bibinfo {pages} {501} (\bibinfo {year} {2020})}\BibitemShut {NoStop}%
\bibitem [{\citenamefont {Acin}\ \emph {et~al.}(2000)\citenamefont {Acin}, \citenamefont {Andrianov}, \citenamefont {Costa}, \citenamefont {Jane}, \citenamefont {Latorre},\ and\ \citenamefont {Tarrach}}]{3qubitlu2000}%
  \BibitemOpen
  \bibfield  {author} {\bibinfo {author} {\bibfnamefont {A.}~\bibnamefont {Acin}}, \bibinfo {author} {\bibfnamefont {A.}~\bibnamefont {Andrianov}}, \bibinfo {author} {\bibfnamefont {L.}~\bibnamefont {Costa}}, \bibinfo {author} {\bibfnamefont {E.}~\bibnamefont {Jane}}, \bibinfo {author} {\bibfnamefont {J.~I.}\ \bibnamefont {Latorre}},\ and\ \bibinfo {author} {\bibfnamefont {R.}~\bibnamefont {Tarrach}},\ }\bibfield  {title} {\bibinfo {title} {Generalized schmidt decomposition and classification of three-quantum-bit states},\ }\href {https://doi.org/10.1103/PhysRevLett.85.1560} {\bibfield  {journal} {\bibinfo  {journal} {Phys. Rev. Lett.}\ }\textbf {\bibinfo {volume} {85}},\ \bibinfo {pages} {1560} (\bibinfo {year} {2000})}\BibitemShut {NoStop}%
\bibitem [{\citenamefont {Acin}\ \emph {et~al.}(2001)\citenamefont {Acin}, \citenamefont {Bruss}, \citenamefont {Lewenstein},\ and\ \citenamefont {Sanpera}}]{3qubitinequiv2001}%
  \BibitemOpen
  \bibfield  {author} {\bibinfo {author} {\bibfnamefont {A.}~\bibnamefont {Acin}}, \bibinfo {author} {\bibfnamefont {D.}~\bibnamefont {Bruss}}, \bibinfo {author} {\bibfnamefont {M.}~\bibnamefont {Lewenstein}},\ and\ \bibinfo {author} {\bibfnamefont {A.}~\bibnamefont {Sanpera}},\ }\bibfield  {title} {\bibinfo {title} {Classification of mixed three-qubit states},\ }\href {https://doi.org/10.1103/PhysRevLett.87.040401} {\bibfield  {journal} {\bibinfo  {journal} {Phys. Rev. Lett.}\ }\textbf {\bibinfo {volume} {87}},\ \bibinfo {pages} {040401} (\bibinfo {year} {2001})}\BibitemShut {NoStop}%
\bibitem [{\citenamefont {Kraus}(2010{\natexlab{a}})}]{luequiv2010}%
  \BibitemOpen
  \bibfield  {author} {\bibinfo {author} {\bibfnamefont {B.}~\bibnamefont {Kraus}},\ }\bibfield  {title} {\bibinfo {title} {Local unitary equivalence of multipartite pure states},\ }\href {https://doi.org/10.1103/PhysRevLett.104.020504} {\bibfield  {journal} {\bibinfo  {journal} {Phys. Rev. Lett.}\ }\textbf {\bibinfo {volume} {104}},\ \bibinfo {pages} {020504} (\bibinfo {year} {2010}{\natexlab{a}})}\BibitemShut {NoStop}%
\bibitem [{\citenamefont {Kraus}(2010{\natexlab{b}})}]{pslu-2010}%
  \BibitemOpen
  \bibfield  {author} {\bibinfo {author} {\bibfnamefont {B.}~\bibnamefont {Kraus}},\ }\bibfield  {title} {\bibinfo {title} {Local unitary equivalence and entanglement of multipartite pure states},\ }\href {https://doi.org/10.1103/PhysRevA.82.032121} {\bibfield  {journal} {\bibinfo  {journal} {Phys. Rev. A}\ }\textbf {\bibinfo {volume} {82}},\ \bibinfo {pages} {032121} (\bibinfo {year} {2010}{\natexlab{b}})}\BibitemShut {NoStop}%
\bibitem [{\citenamefont {Liu}\ \emph {et~al.}(2012)\citenamefont {Liu}, \citenamefont {Li}, \citenamefont {Li},\ and\ \citenamefont {Qiao}}]{mpsinequivlu2012}%
  \BibitemOpen
  \bibfield  {author} {\bibinfo {author} {\bibfnamefont {B.}~\bibnamefont {Liu}}, \bibinfo {author} {\bibfnamefont {J.-L.}\ \bibnamefont {Li}}, \bibinfo {author} {\bibfnamefont {X.}~\bibnamefont {Li}},\ and\ \bibinfo {author} {\bibfnamefont {C.-F.}\ \bibnamefont {Qiao}},\ }\bibfield  {title} {\bibinfo {title} {Local unitary classification of arbitrary dimensional multipartite pure states},\ }\href {https://doi.org/10.1103/PhysRevLett.108.050501} {\bibfield  {journal} {\bibinfo  {journal} {Phys. Rev. Lett.}\ }\textbf {\bibinfo {volume} {108}},\ \bibinfo {pages} {050501} (\bibinfo {year} {2012})}\BibitemShut {NoStop}%
\bibitem [{\citenamefont {Li}(2018)}]{luspinflip2018}%
  \BibitemOpen
  \bibfield  {author} {\bibinfo {author} {\bibfnamefont {D.}~\bibnamefont {Li}},\ }\bibfield  {title} {\bibinfo {title} {Stochastic local operations and classical communication (slocc) and local unitary operations (lu) classifications of n qubits via ranks and singular values of the spin-flipping matrices},\ }\href {https://doi.org/10.1007/s11128-018-1900-3} {\bibfield  {journal} {\bibinfo  {journal} {Quantum Information Processing}\ }\textbf {\bibinfo {volume} {17}},\ \bibinfo {pages} {132} (\bibinfo {year} {2018})}\BibitemShut {NoStop}%
\bibitem [{\citenamefont {Makhlin}(2002)}]{Makhlin2002}%
  \BibitemOpen
  \bibfield  {author} {\bibinfo {author} {\bibfnamefont {Y.}~\bibnamefont {Makhlin}},\ }\bibfield  {title} {\bibinfo {title} {Nonlocal properties of two-qubit gates and mixed states, and the optimization of quantum computations},\ }\href {https://doi.org/10.1023/A:1022144002391} {\bibfield  {journal} {\bibinfo  {journal} {Quantum Information Processing}\ }\textbf {\bibinfo {volume} {1}},\ \bibinfo {pages} {243} (\bibinfo {year} {2002})}\BibitemShut {NoStop}%
\bibitem [{\citenamefont {Sun}\ \emph {et~al.}(2017)\citenamefont {Sun}, \citenamefont {Fei},\ and\ \citenamefont {Wang}}]{3qbitinv2017}%
  \BibitemOpen
  \bibfield  {author} {\bibinfo {author} {\bibfnamefont {B.-Z.}\ \bibnamefont {Sun}}, \bibinfo {author} {\bibfnamefont {S.-M.}\ \bibnamefont {Fei}},\ and\ \bibinfo {author} {\bibfnamefont {Z.-X.}\ \bibnamefont {Wang}},\ }\bibfield  {title} {\bibinfo {title} {On local unitary equivalence of two and three-qubit states},\ }\href {https://doi.org/10.1038/s41598-017-04717-2} {\bibfield  {journal} {\bibinfo  {journal} {Scientific Reports}\ }\textbf {\bibinfo {volume} {7}},\ \bibinfo {pages} {4869} (\bibinfo {year} {2017})}\BibitemShut {NoStop}%
\bibitem [{\citenamefont {Zhou}\ \emph {et~al.}(2012)\citenamefont {Zhou}, \citenamefont {Zhang}, \citenamefont {Fei}, \citenamefont {Jing},\ and\ \citenamefont {Li-Jost}}]{bilu2012}%
  \BibitemOpen
  \bibfield  {author} {\bibinfo {author} {\bibfnamefont {C.}~\bibnamefont {Zhou}}, \bibinfo {author} {\bibfnamefont {T.-G.}\ \bibnamefont {Zhang}}, \bibinfo {author} {\bibfnamefont {S.-M.}\ \bibnamefont {Fei}}, \bibinfo {author} {\bibfnamefont {N.}~\bibnamefont {Jing}},\ and\ \bibinfo {author} {\bibfnamefont {X.}~\bibnamefont {Li-Jost}},\ }\bibfield  {title} {\bibinfo {title} {Local unitary equivalence of arbitrary dimensional bipartite quantum states},\ }\href {https://doi.org/10.1103/PhysRevA.86.010303} {\bibfield  {journal} {\bibinfo  {journal} {Phys. Rev. A}\ }\textbf {\bibinfo {volume} {86}},\ \bibinfo {pages} {010303} (\bibinfo {year} {2012})}\BibitemShut {NoStop}%
\bibitem [{\citenamefont {Zhang}\ \emph {et~al.}(2013)\citenamefont {Zhang}, \citenamefont {Zhao}, \citenamefont {Li}, \citenamefont {Fei},\ and\ \citenamefont {Li-Jost}}]{msinv2013}%
  \BibitemOpen
  \bibfield  {author} {\bibinfo {author} {\bibfnamefont {T.-G.}\ \bibnamefont {Zhang}}, \bibinfo {author} {\bibfnamefont {M.-J.}\ \bibnamefont {Zhao}}, \bibinfo {author} {\bibfnamefont {M.}~\bibnamefont {Li}}, \bibinfo {author} {\bibfnamefont {S.-M.}\ \bibnamefont {Fei}},\ and\ \bibinfo {author} {\bibfnamefont {X.}~\bibnamefont {Li-Jost}},\ }\bibfield  {title} {\bibinfo {title} {Criterion of local unitary equivalence for multipartite states},\ }\href {https://doi.org/10.1103/PhysRevA.88.042304} {\bibfield  {journal} {\bibinfo  {journal} {Phys. Rev. A}\ }\textbf {\bibinfo {volume} {88}},\ \bibinfo {pages} {042304} (\bibinfo {year} {2013})}\BibitemShut {NoStop}%
\bibitem [{\citenamefont {Bhosale}\ \emph {et~al.}(2013)\citenamefont {Bhosale}, \citenamefont {Shuddhodan},\ and\ \citenamefont {Lakshminarayan}}]{PTLU2013}%
  \BibitemOpen
  \bibfield  {author} {\bibinfo {author} {\bibfnamefont {U.~T.}\ \bibnamefont {Bhosale}}, \bibinfo {author} {\bibfnamefont {K.~V.}\ \bibnamefont {Shuddhodan}},\ and\ \bibinfo {author} {\bibfnamefont {A.}~\bibnamefont {Lakshminarayan}},\ }\bibfield  {title} {\bibinfo {title} {Using partial transpose and realignment to generate local unitary invariants},\ }\href {https://doi.org/10.1103/PhysRevA.87.052311} {\bibfield  {journal} {\bibinfo  {journal} {Phys. Rev. A}\ }\textbf {\bibinfo {volume} {87}},\ \bibinfo {pages} {052311} (\bibinfo {year} {2013})}\BibitemShut {NoStop}%
\bibitem [{\citenamefont {Jing}\ \emph {et~al.}(2015)\citenamefont {Jing}, \citenamefont {Fei}, \citenamefont {Li}, \citenamefont {Li-Jost},\ and\ \citenamefont {Zhang}}]{mqbitinv2015}%
  \BibitemOpen
  \bibfield  {author} {\bibinfo {author} {\bibfnamefont {N.}~\bibnamefont {Jing}}, \bibinfo {author} {\bibfnamefont {S.-M.}\ \bibnamefont {Fei}}, \bibinfo {author} {\bibfnamefont {M.}~\bibnamefont {Li}}, \bibinfo {author} {\bibfnamefont {X.}~\bibnamefont {Li-Jost}},\ and\ \bibinfo {author} {\bibfnamefont {T.}~\bibnamefont {Zhang}},\ }\bibfield  {title} {\bibinfo {title} {Local unitary invariants of generic multiqubit states},\ }\href {https://doi.org/10.1103/PhysRevA.92.022306} {\bibfield  {journal} {\bibinfo  {journal} {Phys. Rev. A}\ }\textbf {\bibinfo {volume} {92}},\ \bibinfo {pages} {022306} (\bibinfo {year} {2015})}\BibitemShut {NoStop}%
\bibitem [{\citenamefont {Dobes}\ and\ \citenamefont {Jing}(2025)}]{luhypermatrix2025}%
  \BibitemOpen
  \bibfield  {author} {\bibinfo {author} {\bibfnamefont {I.}~\bibnamefont {Dobes}}\ and\ \bibinfo {author} {\bibfnamefont {N.}~\bibnamefont {Jing}},\ }\bibfield  {title} {\bibinfo {title} {Local unitary equivalence of tripartite quantum states in terms of trace identities},\ }\href {https://doi.org/10.1007/s11128-025-04784-9} {\bibfield  {journal} {\bibinfo  {journal} {Quantum Information Processing}\ }\textbf {\bibinfo {volume} {24}},\ \bibinfo {pages} {158} (\bibinfo {year} {2025})}\BibitemShut {NoStop}%
\bibitem [{\citenamefont {Chu}\ \emph {et~al.}(2026)\citenamefont {Chu}, \citenamefont {Cui}, \citenamefont {Xie}, \citenamefont {Liu},\ and\ \citenamefont {Fei}}]{LUdegen2026}%
  \BibitemOpen
  \bibfield  {author} {\bibinfo {author} {\bibfnamefont {Y.}~\bibnamefont {Chu}}, \bibinfo {author} {\bibfnamefont {C.}~\bibnamefont {Cui}}, \bibinfo {author} {\bibfnamefont {Y.}~\bibnamefont {Xie}}, \bibinfo {author} {\bibfnamefont {M.}~\bibnamefont {Liu}},\ and\ \bibinfo {author} {\bibfnamefont {S.-M.}\ \bibnamefont {Fei}},\ }\bibfield  {title} {\bibinfo {title} {Identifying local unitary equivalence based on reduction of quantum states},\ }\href {https://doi.org/10.1103/kjgk-cpw7} {\bibfield  {journal} {\bibinfo  {journal} {Phys. Rev. A}\ }\textbf {\bibinfo {volume} {114}},\ \bibinfo {pages} {012451} (\bibinfo {year} {2026})}\BibitemShut {NoStop}%
\bibitem [{\citenamefont {Wang}\ \emph {et~al.}(2025)\citenamefont {Wang}, \citenamefont {Yuan}, \citenamefont {Li}, \citenamefont {Yang},\ and\ \citenamefont {Fei}}]{LUGBS2025}%
  \BibitemOpen
  \bibfield  {author} {\bibinfo {author} {\bibfnamefont {C.-H.}\ \bibnamefont {Wang}}, \bibinfo {author} {\bibfnamefont {J.-T.}\ \bibnamefont {Yuan}}, \bibinfo {author} {\bibfnamefont {M.-S.}\ \bibnamefont {Li}}, \bibinfo {author} {\bibfnamefont {Y.-H.}\ \bibnamefont {Yang}},\ and\ \bibinfo {author} {\bibfnamefont {S.-M.}\ \bibnamefont {Fei}},\ }\bibfield  {title} {\bibinfo {title} {Local unitary classification of sets of generalized bell states in $\mathbb{C}^{d}\otimes \mathbb{C}^{d}$},\ }\href@noop {} {\bibfield  {journal} {\bibinfo  {journal} {EPJ Quantum Technology}\ }\textbf {\bibinfo {volume} {12}},\ \bibinfo {pages} {87} (\bibinfo {year} {2025})}\BibitemShut {NoStop}%
\bibitem [{\citenamefont {Shen}\ and\ \citenamefont {Chen}(2025)}]{yiprr2025}%
  \BibitemOpen
  \bibfield  {author} {\bibinfo {author} {\bibfnamefont {Y.}~\bibnamefont {Shen}}\ and\ \bibinfo {author} {\bibfnamefont {L.}~\bibnamefont {Chen}},\ }\bibfield  {title} {\bibinfo {title} {Decidabilities of local unitary equivalence for entanglement witnesses and states},\ }\href {https://doi.org/10.1103/bzfp-wxbt} {\bibfield  {journal} {\bibinfo  {journal} {Phys. Rev. Res.}\ }\textbf {\bibinfo {volume} {7}},\ \bibinfo {pages} {033258} (\bibinfo {year} {2025})}\BibitemShut {NoStop}%
\bibitem [{\citenamefont {Buhrman}\ \emph {et~al.}(2001)\citenamefont {Buhrman}, \citenamefont {Cleve}, \citenamefont {Watrous},\ and\ \citenamefont {de~Wolf}}]{qfinprint2001}%
  \BibitemOpen
  \bibfield  {author} {\bibinfo {author} {\bibfnamefont {H.}~\bibnamefont {Buhrman}}, \bibinfo {author} {\bibfnamefont {R.}~\bibnamefont {Cleve}}, \bibinfo {author} {\bibfnamefont {J.}~\bibnamefont {Watrous}},\ and\ \bibinfo {author} {\bibfnamefont {R.}~\bibnamefont {de~Wolf}},\ }\bibfield  {title} {\bibinfo {title} {Quantum fingerprinting},\ }\href {https://doi.org/10.1103/PhysRevLett.87.167902} {\bibfield  {journal} {\bibinfo  {journal} {Phys. Rev. Lett.}\ }\textbf {\bibinfo {volume} {87}},\ \bibinfo {pages} {167902} (\bibinfo {year} {2001})}\BibitemShut {NoStop}%
\bibitem [{\citenamefont {Wagner}\ \emph {et~al.}(2024)\citenamefont {Wagner}, \citenamefont {Schwartzman-Nowik}, \citenamefont {Paiva}, \citenamefont {Te’eni}, \citenamefont {Ruiz-Molero}, \citenamefont {Barbosa}, \citenamefont {Cohen},\ and\ \citenamefont {Galvão}}]{KDBI2024}%
  \BibitemOpen
  \bibfield  {author} {\bibinfo {author} {\bibfnamefont {R.}~\bibnamefont {Wagner}}, \bibinfo {author} {\bibfnamefont {Z.}~\bibnamefont {Schwartzman-Nowik}}, \bibinfo {author} {\bibfnamefont {I.~L.}\ \bibnamefont {Paiva}}, \bibinfo {author} {\bibfnamefont {A.}~\bibnamefont {Te’eni}}, \bibinfo {author} {\bibfnamefont {A.}~\bibnamefont {Ruiz-Molero}}, \bibinfo {author} {\bibfnamefont {R.~S.}\ \bibnamefont {Barbosa}}, \bibinfo {author} {\bibfnamefont {E.}~\bibnamefont {Cohen}},\ and\ \bibinfo {author} {\bibfnamefont {E.~F.}\ \bibnamefont {Galvão}},\ }\bibfield  {title} {\bibinfo {title} {Quantum circuits for measuring weak values, kirkwood–dirac quasiprobability distributions, and state spectra},\ }\href {https://doi.org/10.1088/2058-9565/ad124c} {\bibfield  {journal} {\bibinfo  {journal} {Quantum Science and Technology}\ }\textbf {\bibinfo {volume} {9}},\ \bibinfo {pages} {015030} (\bibinfo {year} {2024})}\BibitemShut {NoStop}%
\bibitem [{\citenamefont {Fernandes}\ \emph {et~al.}(2024)\citenamefont {Fernandes}, \citenamefont {Wagner}, \citenamefont {Novo},\ and\ \citenamefont {Galv\~ao}}]{BIIWprl2024}%
  \BibitemOpen
  \bibfield  {author} {\bibinfo {author} {\bibfnamefont {C.}~\bibnamefont {Fernandes}}, \bibinfo {author} {\bibfnamefont {R.}~\bibnamefont {Wagner}}, \bibinfo {author} {\bibfnamefont {L.}~\bibnamefont {Novo}},\ and\ \bibinfo {author} {\bibfnamefont {E.~F.}\ \bibnamefont {Galv\~ao}},\ }\bibfield  {title} {\bibinfo {title} {Unitary-invariant witnesses of quantum imaginarity},\ }\href {https://doi.org/10.1103/PhysRevLett.133.190201} {\bibfield  {journal} {\bibinfo  {journal} {Phys. Rev. Lett.}\ }\textbf {\bibinfo {volume} {133}},\ \bibinfo {pages} {190201} (\bibinfo {year} {2024})}\BibitemShut {NoStop}%
\bibitem [{\citenamefont {Li}\ and\ \citenamefont {Tan}(2025)}]{BIIWpra2025}%
  \BibitemOpen
  \bibfield  {author} {\bibinfo {author} {\bibfnamefont {M.-S.}\ \bibnamefont {Li}}\ and\ \bibinfo {author} {\bibfnamefont {Y.-X.}\ \bibnamefont {Tan}},\ }\bibfield  {title} {\bibinfo {title} {Bargmann invariants for quantum imaginarity},\ }\href {https://doi.org/10.1103/PhysRevA.111.022409} {\bibfield  {journal} {\bibinfo  {journal} {Phys. Rev. A}\ }\textbf {\bibinfo {volume} {111}},\ \bibinfo {pages} {022409} (\bibinfo {year} {2025})}\BibitemShut {NoStop}%
\bibitem [{\citenamefont {Li}\ \emph {et~al.}(2026)\citenamefont {Li}, \citenamefont {Wagner},\ and\ \citenamefont {Zhang}}]{BIIWpra2026}%
  \BibitemOpen
  \bibfield  {author} {\bibinfo {author} {\bibfnamefont {M.-S.}\ \bibnamefont {Li}}, \bibinfo {author} {\bibfnamefont {R.}~\bibnamefont {Wagner}},\ and\ \bibinfo {author} {\bibfnamefont {L.}~\bibnamefont {Zhang}},\ }\bibfield  {title} {\bibinfo {title} {Multistate imaginarity and coherence in qubit systems},\ }\href {https://doi.org/10.1103/tpgw-v6ht} {\bibfield  {journal} {\bibinfo  {journal} {Phys. Rev. A}\ }\textbf {\bibinfo {volume} {113}},\ \bibinfo {pages} {012428} (\bibinfo {year} {2026})}\BibitemShut {NoStop}%
\bibitem [{\citenamefont {Guo}\ \emph {et~al.}(2026)\citenamefont {Guo}, \citenamefont {Pan}, \citenamefont {Wang},\ and\ \citenamefont {Du}}]{SetIguo2026}%
  \BibitemOpen
  \bibfield  {author} {\bibinfo {author} {\bibfnamefont {Y.}~\bibnamefont {Guo}}, \bibinfo {author} {\bibfnamefont {J.}~\bibnamefont {Pan}}, \bibinfo {author} {\bibfnamefont {Y.}~\bibnamefont {Wang}},\ and\ \bibinfo {author} {\bibfnamefont {S.}~\bibnamefont {Du}},\ }\bibfield  {title} {\bibinfo {title} {Measure of set imaginarity},\ }\href {https://doi.org/10.1103/1yc6-lzfk} {\bibfield  {journal} {\bibinfo  {journal} {Phys. Rev. A}\ }\textbf {\bibinfo {volume} {114}},\ \bibinfo {pages} {022453} (\bibinfo {year} {2026})}\BibitemShut {NoStop}%
\bibitem [{\citenamefont {Quek}\ \emph {et~al.}(2024)\citenamefont {Quek}, \citenamefont {Kaur},\ and\ \citenamefont {Wilde}}]{multivariatetr2024}%
  \BibitemOpen
  \bibfield  {author} {\bibinfo {author} {\bibfnamefont {Y.}~\bibnamefont {Quek}}, \bibinfo {author} {\bibfnamefont {E.}~\bibnamefont {Kaur}},\ and\ \bibinfo {author} {\bibfnamefont {M.~M.}\ \bibnamefont {Wilde}},\ }\bibfield  {title} {\bibinfo {title} {Multivariate trace estimation in constant quantum depth},\ }\href {https://doi.org/10.22331/q-2024-01-10-1220} {\bibfield  {journal} {\bibinfo  {journal} {{Quantum}}\ }\textbf {\bibinfo {volume} {8}},\ \bibinfo {pages} {1220} (\bibinfo {year} {2024})}\BibitemShut {NoStop}%
\bibitem [{\citenamefont {Zhang}\ \emph {et~al.}(2025{\natexlab{a}})\citenamefont {Zhang}, \citenamefont {Xie},\ and\ \citenamefont {Li}}]{BIchara202504}%
  \BibitemOpen
  \bibfield  {author} {\bibinfo {author} {\bibfnamefont {L.}~\bibnamefont {Zhang}}, \bibinfo {author} {\bibfnamefont {B.}~\bibnamefont {Xie}},\ and\ \bibinfo {author} {\bibfnamefont {B.}~\bibnamefont {Li}},\ }\bibfield  {title} {\bibinfo {title} {Geometry of sets of bargmann invariants},\ }\href {https://doi.org/10.1103/PhysRevA.111.042417} {\bibfield  {journal} {\bibinfo  {journal} {Phys. Rev. A}\ }\textbf {\bibinfo {volume} {111}},\ \bibinfo {pages} {042417} (\bibinfo {year} {2025}{\natexlab{a}})}\BibitemShut {NoStop}%
\bibitem [{\citenamefont {Pratapsi}\ \emph {et~al.}(2025)\citenamefont {Pratapsi}, \citenamefont {Gouveia}, \citenamefont {Novo},\ and\ \citenamefont {Galv\~ao}}]{BIchara202510}%
  \BibitemOpen
  \bibfield  {author} {\bibinfo {author} {\bibfnamefont {S.~S.}\ \bibnamefont {Pratapsi}}, \bibinfo {author} {\bibfnamefont {J.~a.}\ \bibnamefont {Gouveia}}, \bibinfo {author} {\bibfnamefont {L.}~\bibnamefont {Novo}},\ and\ \bibinfo {author} {\bibfnamefont {E.~F.}\ \bibnamefont {Galv\~ao}},\ }\bibfield  {title} {\bibinfo {title} {Elementary characterization of bargmann invariants},\ }\href {https://doi.org/10.1103/hsnv-wpt3} {\bibfield  {journal} {\bibinfo  {journal} {Phys. Rev. A}\ }\textbf {\bibinfo {volume} {112}},\ \bibinfo {pages} {042421} (\bibinfo {year} {2025})}\BibitemShut {NoStop}%
\bibitem [{\citenamefont {Xu}(2026{\natexlab{a}})}]{BIcharaXU202601}%
  \BibitemOpen
  \bibfield  {author} {\bibinfo {author} {\bibfnamefont {J.}~\bibnamefont {Xu}},\ }\bibfield  {title} {\bibinfo {title} {Numerical ranges of bargmann invariants},\ }\href {https://doi.org/https://doi.org/10.1016/j.physleta.2025.131091} {\bibfield  {journal} {\bibinfo  {journal} {Physics Letters A}\ }\textbf {\bibinfo {volume} {565}},\ \bibinfo {pages} {131091} (\bibinfo {year} {2026}{\natexlab{a}})}\BibitemShut {NoStop}%
\bibitem [{\citenamefont {Xu}(2026{\natexlab{b}})}]{BIcharaXu202602}%
  \BibitemOpen
  \bibfield  {author} {\bibinfo {author} {\bibfnamefont {J.}~\bibnamefont {Xu}},\ }\bibfield  {title} {\bibinfo {title} {Bargmann invariants of gaussian states},\ }\href {https://doi.org/10.1063/5.0306797} {\bibfield  {journal} {\bibinfo  {journal} {Journal of Mathematical Physics}\ }\textbf {\bibinfo {volume} {67}},\ \bibinfo {pages} {052101} (\bibinfo {year} {2026}{\natexlab{b}})}\BibitemShut {NoStop}%
\bibitem [{\citenamefont {Zhang}\ and\ \citenamefont {Xie}(2026)}]{BIZL2026review}%
  \BibitemOpen
  \bibfield  {author} {\bibinfo {author} {\bibfnamefont {L.}~\bibnamefont {Zhang}}\ and\ \bibinfo {author} {\bibfnamefont {B.}~\bibnamefont {Xie}},\ }\href {https://arxiv.org/abs/2601.01858} {\bibinfo {title} {A survey of bargmann invariants: Geometric foundations and applications}} (\bibinfo {year} {2026}),\ \Eprint {https://arxiv.org/abs/2601.01858} {arXiv:2601.01858 [quant-ph]} \BibitemShut {NoStop}%
\bibitem [{\citenamefont {Zhang}\ \emph {et~al.}(2025{\natexlab{b}})\citenamefont {Zhang}, \citenamefont {Xie},\ and\ \citenamefont {Tao}}]{BIZL202501}%
  \BibitemOpen
  \bibfield  {author} {\bibinfo {author} {\bibfnamefont {L.}~\bibnamefont {Zhang}}, \bibinfo {author} {\bibfnamefont {B.}~\bibnamefont {Xie}},\ and\ \bibinfo {author} {\bibfnamefont {Y.}~\bibnamefont {Tao}},\ }\bibfield  {title} {\bibinfo {title} {Bargmann-invariant framework for local unitary equivalence and entanglement},\ }\href {https://doi.org/10.1103/s3mp-3kn6} {\bibfield  {journal} {\bibinfo  {journal} {Phys. Rev. A}\ }\textbf {\bibinfo {volume} {112}},\ \bibinfo {pages} {052426} (\bibinfo {year} {2025}{\natexlab{b}})}\BibitemShut {NoStop}%
\bibitem [{\citenamefont {Ma}\ and\ \citenamefont {Shi}(2026)}]{LUBI202607}%
  \BibitemOpen
  \bibfield  {author} {\bibinfo {author} {\bibfnamefont {M.}~\bibnamefont {Ma}}\ and\ \bibinfo {author} {\bibfnamefont {R.}~\bibnamefont {Shi}},\ }\href {https://arxiv.org/abs/2607.16878} {\bibinfo {title} {Bargmann invariants and local unitary equivalence}} (\bibinfo {year} {2026}),\ \Eprint {https://arxiv.org/abs/2607.16878} {arXiv:2607.16878 [quant-ph]} \BibitemShut {NoStop}%
\bibitem [{\citenamefont {Liu}\ \emph {et~al.}(2026)\citenamefont {Liu}, \citenamefont {Chu}, \citenamefont {Zhang}, \citenamefont {Xie},\ and\ \citenamefont {Cui}}]{LPLU25}%
  \BibitemOpen
  \bibfield  {author} {\bibinfo {author} {\bibfnamefont {M.}~\bibnamefont {Liu}}, \bibinfo {author} {\bibfnamefont {Y.}~\bibnamefont {Chu}}, \bibinfo {author} {\bibfnamefont {K.}~\bibnamefont {Zhang}}, \bibinfo {author} {\bibfnamefont {Y.}~\bibnamefont {Xie}},\ and\ \bibinfo {author} {\bibfnamefont {C.}~\bibnamefont {Cui}},\ }\bibfield  {title} {\bibinfo {title} {On local unitary equivalence and local permutation equivalence of bipartite quantum states under permutation transformations},\ }\href {https://doi.org/https://doi.org/10.1016/j.physleta.2025.131223} {\bibfield  {journal} {\bibinfo  {journal} {Physics Letters A}\ }\textbf {\bibinfo {volume} {568}},\ \bibinfo {pages} {131223} (\bibinfo {year} {2026})}\BibitemShut {NoStop}%
\bibitem [{\citenamefont {Shen}\ \emph {et~al.}(2020)\citenamefont {Shen}, \citenamefont {Chen},\ and\ \citenamefont {jun Zhao}}]{inertiays2020}%
  \BibitemOpen
  \bibfield  {author} {\bibinfo {author} {\bibfnamefont {Y.}~\bibnamefont {Shen}}, \bibinfo {author} {\bibfnamefont {L.}~\bibnamefont {Chen}},\ and\ \bibinfo {author} {\bibfnamefont {L.}~\bibnamefont {jun Zhao}},\ }\bibfield  {title} {\bibinfo {title} {Inertias of entanglement witnesses},\ }\href {http://iopscience.iop.org/10.1088/1751-8121/abbec1} {\bibfield  {journal} {\bibinfo  {journal} {Journal of Physics A: Mathematical and Theoretical}\ } (\bibinfo {year} {2020})}\BibitemShut {NoStop}%
\bibitem [{\citenamefont {Shen}\ \emph {et~al.}(2025)\citenamefont {Shen}, \citenamefont {Chen},\ and\ \citenamefont {Bian}}]{rew2024}%
  \BibitemOpen
  \bibfield  {author} {\bibinfo {author} {\bibfnamefont {Y.}~\bibnamefont {Shen}}, \bibinfo {author} {\bibfnamefont {L.}~\bibnamefont {Chen}},\ and\ \bibinfo {author} {\bibfnamefont {Z.}~\bibnamefont {Bian}},\ }\bibfield  {title} {\bibinfo {title} {Detection power of real entanglement witnesses under local unitary equivalence},\ }\href {https://doi.org/10.1103/py8r-jklj} {\bibfield  {journal} {\bibinfo  {journal} {Phys. Rev. A}\ }\textbf {\bibinfo {volume} {112}},\ \bibinfo {pages} {022415} (\bibinfo {year} {2025})}\BibitemShut {NoStop}%
\end{thebibliography}%
